\newcommand{\cF}{\mathcal{F}}
\newcommand{\cs}[1]{\textcolor{black}{#1}}
\newcommand{\steven}[1]{\textcolor{black}{#1}}
\newcommand{\smore}[1]{\textcolor{black}{#1}}
\newcommand{\sblah}[1]{\textcolor{black}{#1}}
\newcommand{\commentOut}[1]{}
\newcommand{\sthursday}[1]{\textcolor{black}{#1}}
\newcommand{\cswg}[1]{\textcolor{black}{#1}}
\title{Phylogenetic incongruence through the lens of Monadic Second Order logic}
\author{Steven Kelk, Leo van Iersel and Celine Scornavacca}
\institute{}
\institute{
Department of Knowledge Engineering (DKE), Maastricht University, P.O. Box 616, 6200 MD Maastricht, The Netherlands, \email{steven.kelk@maastrichtuniversity.nl}
\and
\smore{Delft Institute of Applied Mathematics, Delft University of Technology, P.O. Box 5, 2600 AA Delft, The Netherlands, \email{l.j.j.v.iersel@gmail.com}}
\and
Institut des Sciences de l'Evolution (Universit\'e de Montpellier, CNRS, IRD, EPHE), Place E. Bataillon - CC 064 - 34095 Montpellier Cedex 5, France, \email{celine.scornavacca@univ-montp2.fr}
}
\begin{document}
\maketitle

\begin{abstract}
Within the field of phylogenetics there is growing interest in measures for summarising the dissimilarity, or \emph{incongruence}, of two or more phylogenetic trees.  Many of these
measures are NP-hard to compute and this has stimulated a considerable volume of research into fixed parameter tractable algorithms.
In this article we use \emph{Monadic Second Order} logic \smore{(MSOL)} to give alternative, compact proofs of fixed parameter tractability for several well-known incongruency measures.
In doing so we wish to demonstrate the considerable potential of MSOL - machinery still largely unknown outside the algorithmic graph theory community - within phylogenetics.
A crucial component of this work is the observation that many of these measures, when bounded, imply the existence of an \emph{agreement forest} of bounded size, which in turn implies that an auxiliary graph structure, the \emph{display graph}, has bounded treewidth. It is this bound on treewidth that makes the machinery of MSOL available for proving fixed parameter tractability. We give a variety of different MSOL formulations. Some
are based on explicitly encoding agreement forests, while some only use \cswg{them} implicitly to generate the treewidth bound. Our formulations introduce a number of \smore{``phylogenetics MSOL primitives'}' which will hopefully be of use to other researchers.


\end{abstract}

\section{Introduction}

The central goal of phylogenetics is to accurately infer the evolutionary history of a set of species (or \emph{taxa}) $X$ from incomplete information. Classically, phylogenetic reconstruction has access to  information about each element in $X$, such as DNA data, and seeks to infer a phylogenetic tree - a tree whose leaves are bijectively labeled by $X$ - that best fits this data. There is a vast literature available on this topic and many different algorithms \smore{exist} for constructing phylogenetic trees \cite{felsenstein2004inferring,SempleSteel2003}. In practice, it is not uncommon for phylogenetic analysis to generate multiple phylogenetic trees as output.
This can occur for various reasons, ranging from software engineering choices (many tree-building packages are designed to generate multiple optimal and near-optimal solutions) to more structural explanations (\emph{reticulate} evolutionary signals that are comprised of multiple distinct tree signals). Given two (or more) distinct phylogenetic trees, it is natural to compare them to determine whether the difference is significant. This \cswg{explains the interest of} the
phylogenetics community for measures that can quantify the dissimilarity, or \emph{incongruence}, of phylogenetic trees \cite{Huson2010}. Some of these measures (such as \smore{\textsc{Tree Bisection and \sthursday{Reconnection}}} distance \cite{AllenSteel2001}) are studied to better understand how local-search heuristics, based on rearrangement operations, navigate
the space of phylogenetic trees (e.g., \cite{bryant2004splits}). Others, such as \textsc{Hybridization Number} \cite{Bordewich2007}, are studied because they assist with the inference of phylogenetic networks, which generalise
phylogenetic trees to directed acyclic graphs \cite{Huson2010,Huson2011}.

Unfortunately, many of these measures are NP-hard and APX-hard to compute. On the
positive side, however, the phylogenetics community has been quite successful in
proving that these measures are fixed parameter tractable (FPT) in their natural parameterizations. Informally this means
that a measure \smore{that evaluates to} $k$ can be computed in time $f(k) \cdot poly(n)$ where
$f$ is some function that only depends on $k$ and $n$ is the size of the instance (often
taken to be $|X|$). Such running times have the potential to be much faster than running
times of the form $O( n^{f(k)} )$ when the measure in question is comparatively small.
See e.g. \cite{downey2013fundamentals} for more background on FPT.  A number
of state-of-the-art phylogenetics software packages are based on FPT algorithms, \smore{such as the software used in \cite{Whidden2014}}. Most
FPT results in the phylogenetics literature are based on classical proof techniques such
as polynomial-time kernelization and bounded-search. 

Parallel to all of this, algorithmic graph theorists have made great steps forward in
identifying sufficient, structural conditions under which NP-hard problems on graphs become
(fixed parameter) tractable. At the heart of this research lies the width parameter, the
most famous example being \emph{treewidth}. Informally treewidth is a measure that
quantifies the dissimilarity of a graph from being a tree. The
notion of treewidth, which is most famously associated with the celebrated Graph Minors
project of Robertson and Seymour \cite{robertson1986graph}, has had a profound impact upon algorithm design. A great many NP-hard problems turn out to become tractable on graphs of bounded treewidth, using broadly similar proof techniques i.e. dynamic programming on tree decompositions \cite{bodlaender1994tourist}. This contributed to the rise of meta-theorem\cs{s}, the archetypal example being \emph{Courcelle's Theorem} \cite{Courcelle90,Arnborg91}. 
This states, \smore{when combined with the result from \cite{Bodlaender96}}, that any graph property that can be abstractly formulated as a length $\ell$
sentence of \emph{Monadic Second Order} logic (MSOL), can be tested in time $f(t,\ell) \cdot O(n)$ on graphs of treewidth $t$, where $n$ is the number of vertices in the graph. When
$t$ and $\ell$ are both bounded by a function of a single parameter $p$, this yields
a running time of the form $f(p) \cdot O(n)$ i.e. linear-time fixed parameter tractability in parameter $p$. This is an extremely powerful technique in the sense that it completely
abstracts away from ad-hoc algorithm design and permits \smore{highly} compact, ``declarative'' proofs that a problem is FPT.
Courcelle's Theorem (and its variants) are more than two decades old, but  \cs{their potential is rarely exploited by}
 the phylogenetics community. One exception is the literature on \emph{unrooted compatibility}, which asks whether a set of unrooted phylogenetic trees are compatible. The FPT proof by
Bryant and Lagergren \cite{compatibility2006} proves that the \emph{display graph} (the graph obtained by identifying all taxa with the same label) has bounded treewidth (in the number of input trees), and then gives an MSOL formulation which tests compatibility. A follow-up result by the present authors applies
a similar approach \cite{strictcompatibility2014}.

In this article we show that this technique has much broader potential within phylogenetics. To clarify the exposition we focus on binary trees (both rooted and unrooted) on the same set of taxa $X$. We begin by proving that if two trees have an \emph{agreement forest} of size $k$ --
essentially a partition of the trees into $k$ isomorphic subtrees -- the treewidth of the display graph is bounded by a function of $k$. This simple observation is significant because of the prominent role of agreement forests within the phylogenetics literature. We use this insight
to re-analyse three well-known NP-hard phylogenetics problems that were previously shown to be FPT using more conventional analysis. In particular, we give MSOL formulations for (1)
\textsc{Unrooted Maximum Agreement Forest} (uMAF), which is equivalent to the problem of computing \textsc{Tree Bisection and Reconnection} distance (TBR) on unrooted trees, (2)
\textsc{rooted Maximum Agreement Forest} (rMAF), which is equivalent to the problem
of computing \textsc{Rooted Subtree Prune and Regraft} distance (rSPR) on rooted trees, and (3) \textsc{Hybridization Number} (HN) on rooted trees. The formulations for
uMAF and rMAF are based on explicitly modelling agreement forests using quartets and edge cuts. The formulation for HN uses agreement forests implicitly to obtain the treewidth bound but, due to the
difficulties in encoding \emph{acyclic} agreement forests, then bypasses the agreement forest abstraction. Instead, it encodes an equivalent, ``elimination ordering'' formulation of HN which considers sequences of pruned common subtrees. Finally we consider the (4) \textsc{Maximum Parsimony Distance on Binary Characters} \cs{problem}.
This asks for a binary character $f$ on $X$ that maximizes the absolute difference between the parsimony score of $f$ on the two trees.
It is NP-hard but not known to be FPT (in \cs{the} parsimony distance). Here we give an optimization MSOL formulation which shows that the problem is FPT in parameter uMAF. Although this does not settle whether the natural parameterization of the problem is FPT, it does demonstrate a number of interesting principles. Firstly, it
demonstrates the power of ``simulating'' the execution of polynomial-time algorithms (in this case, Fitch's algorithm \cite{fitch1971}) within MSOL.
Secondly, any subsequent proof that TBR distance
is at most a bounded distance above $d^{2}_{MP}$ distance and/or that $d^2_{MP}$ distance induces bounded
treewidth display graphs, will automatically prove that $d^2_{MP}$ distance is FPT in its natural parameterization.

Summarizing, our formulations show the potential for MSOL to generate compact, logical
FPT proofs for phylogenetics problems. The machinery of MSOL does not yield practical algorithms but it \sblah{is an excellent classification tool. Once the existence of FPT algorithms has been confirmed via MSOL one can then switch efforts
to finding a \emph{good} FPT algorithm by more direct analysis, possibly (but not exclusively) through direct analysis of tree decompositions.}
Our formulations also introduce a number of phylogenetics ``primitives'' concerning quartets, clusters, subtrees and compatibility that we hope will be of use to other
phylogenetics researchers.


\section{Preliminaries}
In this section, we define the main objects that will be manipulated in this paper.  

An \emph{unrooted phylogenetic tree} $T$ (\cs{unrooted  tree for short}) \steven{is a tree in which no vertex has degree 2
and in which the leaves are  bijectively \sthursday{labeled} by a label set $\mathcal{L}(T)$. The
leaf labels are often called \emph{taxa} and the symbol $X$ is frequently used as shorthand
for $\mathcal{L}(T)$.  Internal vertices are not labeled. A \emph{rooted} phylogenetic tree (\cs{rooted  tree for short}) is defined similarly, except that it has exactly one vertex, called the \emph{root} of the tree, that \smore{is permitted to have} degree 2, and edges are directed away from the root. An unrooted tree is \emph{binary} if every internal
vertex has degree 3, and a rooted tree is binary if each internal vertex has indegree 1
and outdegree 2, and the root has outdegree 2 and indegree 0.}


Given  an unrooted  tree $T$ and a subset $Y \subseteq \mathcal{L}(T)$, 
\cs{we use $T(Y)$ to denote the minimal subtree of $T$ connecting $Y$. Moreover, we denote by  $T |_Y$  the tree obtained from $T(Y)$} when suppressing vertices of degree 2. We say that $T |_Y$ is the subtree of $T$ \emph{induced} by $Y$. \sblah{In graph theory terms, $T |_Y$ is a label-preserving topological minor of $T$.} Induced subtrees are defined in the same way for rooted trees, except that the root of $T |_Y$ becomes the vertex in the minimal connecting subgraph that is closest to the root of $T$, and we suppress all degree 2 vertices except the new root. \steven{We write $T - Y$ to denote
$T|_{\mathcal{L(T)} - Y}$.} For any node $u$ of a rooted tree $T$, $T_u$ is the subtree of $T$ rooted at $u$. 

\begin{figure}
\centering
\includegraphics[scale=0.2]{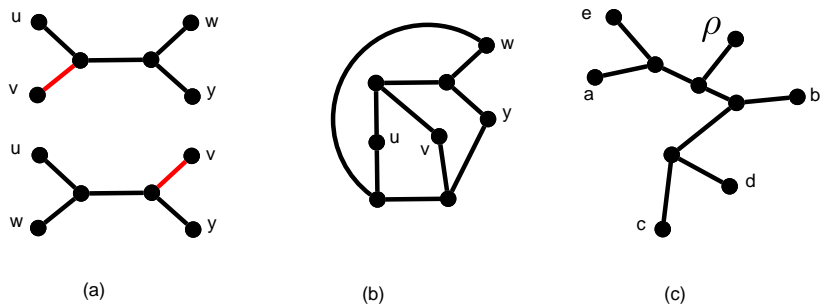}
\caption{\emph{(a) Two unrooted binary phylogenetic trees on $\{u,v,w,y\}$.  A maximum agreement forest (uMAF) for these two trees contains 2 components, and can be \sthursday{obtained} by cutting the single red edge in both trees and then suppressing the resulting degree 2 vertices. (b) The display graph for the two trees from (a), obtained
by identifying leaves with the same label. (c) How addition of a special label $\rho$ can be used to root a tree: edges are assumed to be directed away from $\rho$.}}
\label{fig:displaygraph} 
\end{figure}

Given a label set $X$, a \emph{bipartition} (or \emph{split}) $A|B$ on  $X$  is a partition of  $X$ into two non-empty sets. Each edge $\{u,v\}$ of a tree $T$ induces a split \steven{$\mathcal{L}(T_u)|\mathcal{L}(T_v)$}, where $T_u$ and $T_v$ are the two
trees obtained from $T$ when $\{u,v\}$ is deleted. \smore{Given a rooted tree $T$ \smore{with label set $X$}, a subset $X'$ of $X$ is called a \emph{clade} (or \emph{cluster}) of $T$, if $T$ contains a node $v$ such that $\mathcal{L}(T_v) = X'$.}


Given \steven{an unrooted binary} tree $T$ and a set of four \steven{distinct} labels $\{u,v,w,y\}$ in  $\mathcal{L}(T)$, $T|_{\{u,v,w,y\}}$ \steven{will be exactly one of} the three possible unrooted binary trees on $\{u,v,w,y\}$. 
These are called \emph{quartets} and are denoted respectively by $uv|wy$, $uw|vy$ and $wv|uy$, depending on
\steven{the bipartition induced by its central edge}. \sthursday{In Figure \ref{fig:displaygraph}(a) we see
$uv|wy$ and $uw|vy$.}
Given a rooted \steven{binary} tree $T$ and a set of three labels $\{u,v,w\}$ in  $\mathcal{L}(T)$, $T|_{\{u,v,w\}}$ \steven{will be exactly one of the three} possible rooted binary trees on $\{u,v,w\}$. 
These are called \steven{\emph{triplets}} and are denoted respectively by $uv|w$, $uw|v$ and $wv|u$, \steven{where $ij|k$ means that
the leaf labelled $k$ is incident to the root}.


Let $\mathcal{T}=\{T_1, T_2, \ldots, T_k\}$ be a collection of unrooted  trees, not necessarily on the same \steven{set of taxa}. The \emph{display graph} \smore{of} $\mathcal{T}$ is obtained from the disjoint graph union of all trees in $\mathcal{T}$ by identifying vertices with the same label; \sthursday{see Figure \ref{fig:displaygraph}(b)}.


Given an undirected graph $G = (V,E)$, a \emph{bag} is simply a subset of $V$. A \emph{tree decomposition} of $G$ consists of a tree $T_{G} = (V(T_G), E(T_G))$ where $V(T_G)$
is a collection of bags such that
the following holds: (1) every vertex of
$V$ is in at least one bag; (2) for each edge $\{u,v\} \in E$, there exists some bag that
contains both $u$ and $v$; (3) for each vertex $u \in V$, the bags that contain $u$ induce
a connected subtree of $T_G$. The \emph{width} of a tree decomposition is equal to the
cardinality of its largest bag, minus 1. The \emph{treewidth} of a graph $G$ is equal
to the minimum width, ranging over all possible tree decompositions of $G$. A tree with at
least one edge has treewidth 1. For a fixed value of $k$ one can determine in linear time
whether a graph has treewidth at most $k$ \cite{Bodlaender96}.
\section{Main results}

\sthursday{Unless} stated otherwise, we assume that $T_1 = (V_1, E_1)$ and $T_2 = (V_2, E_2)$ are both
unrooted binary trees on $X$. \steven{Their display graph is denoted by $D=(V,E)$ and $R^D$
denotes the vertex-edge incidence relation in $D$. We use $adj$ to denote the
vertex-vertex adjacency relation in $D$. Note that $|V| = 3|X|-4$ and $|E|=4|X|-6$.\commentOut{, i.e.
the number of vertices and edges in $D$ is linearly bounded by a function of $|X|$.}}
\subsection{TBR / MAF on unrooted trees}
\label{subsec:tbr}

We will start by giving the definitions of  a TBR move and of the TBR distance between two unrooted binary trees.
\begin{definition}[TBR move]
Given an unrooted binary tree $T$,  a \emph{tree bisection and reconnection} (TBR) move on $T$ \steven{consists of removing} an edge  of $T$, say $\{u,v\}$, and then reconnecting  the subtrees $T_u$ and $T_v$ as follows: subdividing an edge of $T_u$ with a new vertex
$p$; subdividing an edge of $T_v$ with a new vertex $q$; connecting $p$ to $q$;
and finally suppressing any vertices of degree 2. 
\end{definition}

\steven{\emph{TBR distance} is then defined naturally as follows:}\\
\\
\textbf{Problem: $d_{TBR}(T_1, T_2)$}\\
\textbf{Input: }Two unrooted binary trees $T_1$, $T_2$ on the same set of taxa $X$.\\
\textbf{Output: }The minimum number of TBR \sthursday{moves} required to transform $T_1$ into $T_2$.\\

We will now give the definition of an uMAF for two unrooted binary trees $T_1$, $T_2$ on $X$. 
Any collection of   trees
\steven{whose label sets partition}
$X$ is said to be \sthursday{a} {\it forest on $X$}.  Furthermore, we  say that a set  $\mathcal{F}=\{F_1,\dots,F_{k}\}$ of unrooted binary phylogenetic trees -- with $|\mathcal{F}|$ referred to as the {\it size} of $\cF$ -- is a {\it forest for} $T$ if $\mathcal{F}$ can be obtained from $T$ by deleting a $(k-1)$-sized subset $E$ of $E(T)$, suppressing any unlabeled leaves, and then finally suppressing any vertices with degree 2. To ease reading, we write $\cF = T-E$ if $\mathcal{F}$ can be obtained in this way. 

\begin{definition}[uMAF]
\label{def:uMAF}
A set $\mathcal{F}$ of \steven{unrooted}  trees is an {\it agreement forest} for $T_1$ and  $T_2$ \steven{(denoted $uAF$)} if $\mathcal{F}$ is a forest \steven{of both} $T_1$ and $T_2$. 
\steven{An} unrooted \emph{maximum} agreement forest (uMAF), is an uAF of \steven{minimum} size.
\end{definition}

\pagebreak
So, the uMAF problem is defined as follows: \\
\\
\textbf{Problem: $uMAF(T_1, T_2)$}\\
\textbf{Input: }Two unrooted binary trees $T_1$, $T_2$ on the same set of taxa $X$.\\
\textbf{Output: }\steven{An} uMAF for $T_1$ and $T_2$.\\

The two problems defined above are closely related, and known to be NP-hard \cite{AllenSteel2001}.

\begin{theorem}[\cite{AllenSteel2001}]
\label{thm:tbrIsMaf}
Given two unrooted binary trees $T_1$, $T_2$ on the same set of taxa $X$, we have that  $d_{TBR}(T_1, T_2)=|uMAF(T_1, T_2)|-1$.
\end{theorem}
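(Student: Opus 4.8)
The plan is to prove the two inequalities separately, both by induction, with the common tool being an analysis of how a single TBR move changes the size of a smallest agreement forest. It is convenient to use the \emph{quotient tree} of a forest $\mathcal{F}=\{F_1,\ldots,F_k\}=T-E$: contract each of the $k$ connected pieces of $T$ obtained by deleting $E$ to a single vertex, keeping the $k-1$ edges of $E$; this is a tree on $k$ vertices, and a leaf of it is a piece that can be detached from $T$ by deleting a single edge. Two elementary facts will be used repeatedly: since binary unrooted trees have no degree-2 vertices, deleting edges never creates unlabelled leaves, so each $F_m$ equals $T|_{X_m}$ where $X_m=\mathcal{L}(F_m)$; and the minimal subtrees $T(X_1),\ldots,T(X_k)$ are pairwise vertex-disjoint, which just restates that $\mathcal{F}$ is a valid forest of $T$.

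\emph{Upper bound} $d_{TBR}(T_1,T_2)\le |uMAF(T_1,T_2)|-1$. Induct on $k=|uMAF(T_1,T_2)|$; the case $k=1$ is immediate since then $T_1=T_2$. For $k\ge 2$, fix a uMAF $\mathcal{F}$, pick a component $F_i$ that is a leaf of the quotient $S_1$ of $T_1$, and pick a neighbour $F_j$ of $F_i$ in the quotient $S_2$ of $T_2$ (which exists as $S_2$ is connected on $k\ge 2$ vertices). In $T_2$ the pieces of $F_i$ and $F_j$ are joined by a single edge of $T_2$, which designates a location on $F_i$ and a location on $F_j$ (a subdivision point of an edge, or a leaf). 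Now do one TBR move on $T_1$: delete the edge detaching $F_i$, then reconnect $F_i$ at the designated location inside the (untouched) copy of $F_j$ living in $T_1-X_i$. In the resulting tree $\hat{T}_1$ one has $\hat{T}_1-X_i=T_1-X_i$ and, by a short case check, $\hat{T}_1|_{X_i\cup X_j}=T_2|_{X_i\cup X_j}$; hence $(\mathcal{F}\setminus\{F_i,F_j\})\cup\{T_2|_{X_i\cup X_j}\}$ is an agreement forest of size $k-1$ for $\hat{T}_1$ and $T_2$. By the inductive hypothesis $d_{TBR}(\hat{T}_1,T_2)\le k-2$, so $d_{TBR}(T_1,T_2)\le 1+(k-2)=k-1$.

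\emph{Lower bound} $d_{TBR}(T_1,T_2)\ge |uMAF(T_1,T_2)|-1$. It suffices to show that one TBR move changes the smallest-agreement-forest size by at most one: if $T_1'$ arises from $T_1$ by a single move, then $|uMAF(T_1,T_2)|\le|uMAF(T_1',T_2)|+1$. Chaining this along an optimal move sequence $T_1=S_0\to\cdots\to S_d=T_2$ gives $|uMAF(T_1,T_2)|\le|uMAF(T_2,T_2)|+d=1+d$, which is the claim. For the one-step statement, write the move as: delete an edge of $T_1$ inducing a bipartition $X_A|X_B$, giving suppressed subtrees $A=T_1|_{X_A}$ and $B=T_1|_{X_B}$, then join them by one new edge. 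This new edge is a bridge in $T_1'$ separating the $A$-side from the $B$-side, so in a uMAF $\mathcal{F}'$ of $T_1'$ and $T_2$ at most one component has a label set meeting both $X_A$ and $X_B$ (two such would force their minimal subtrees to share the bridge); every other component lies on one side, where $T_1'$ and $T_1$ induce the same subtree. Splitting the (at most one) spanning component along the bridge yields a collection of size at most $|uMAF(T_1',T_2)|+1$ which one verifies, by translating edge deletions between $T_1'$, $T_1$ and $T_2$, is an agreement forest for both $T_1$ and $T_2$.

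I expect the main obstacle to be the topological-minor bookkeeping hidden inside both ``one verifies'' steps: in the upper bound, confirming that the re-grafted tree $\hat{T}_1$ really loses one component (getting right the cases where a designated location is a leaf rather than a subdivision point, and checking the new components are still realised by pairwise-disjoint subtrees); in the lower bound, confirming that splitting the spanning component gives a legitimate forest of \emph{both} $T_1$ and $T_2$ (which means tracking how edges of $T_1$ map to edges or paths of $T_1'$ under the subdivide-then-suppress operations of a TBR move). These steps are elementary but fiddly; the inductions and the quotient-tree observations around them are routine.
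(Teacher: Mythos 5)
The paper does not prove this statement: it is imported verbatim from Allen and Steel (2001), and your argument is essentially the proof given in that reference — the upper bound by iteratively detaching one component of an agreement forest and regrafting it with a single TBR move (merging two components at each step), and the lower bound via the key lemma that one TBR move changes the minimum agreement-forest size by at most one, chained along an optimal move sequence. Both inductions are sound, the quotient-tree bookkeeping is the right tool, and the places you flag as fiddly are exactly the places that need care (in particular, splitting the unique component of $\mathcal{F}'$ that meets both sides of the new bridge, and checking that the split $Y_A|Y_B$ is induced by an edge of $T_2|_{Y}$ so that $T_2(Y_A)$ and $T_2(Y_B)$ are disjoint). One small inaccuracy: your claim that deleting edges from a binary tree never creates unlabelled leaves is false — deleting two of the three edges at an internal vertex leaves it as an unlabelled degree-one vertex. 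This does not damage the proof, since the identity $F_m = T|_{X_m}$ you derive from it holds anyway by the paper's definition of a forest for $T$ (which explicitly suppresses unlabelled leaves), but it does mean the endpoints of the edge of $T_2$ joining the pieces of $F_i$ and $F_j$ need not lie on $T_2(X_i)$ or $T_2(X_j)$; the ``designated locations'' are therefore best read off from $T_2|_{X_i\cup X_j}$ directly (the vertices of $T_2(X_i)$ and $T_2(X_j)$ nearest to each other) rather than from that edge itself.
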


Fortunately, they have been proved to be FPT in their natural parameters \steven{\cite{AllenSteel2001}}, and fast algorithms have been recently proposed \cite{whidden2013fixed,chen2015parameterized}. In this section, we will give a more compact proof of their fixed parameter tractability.


\begin{theorem}
\label{thm:twbound}
Let  $T_1$, $T_2$ be two unrooted binary trees  on the same set of taxa $X$ such that a \smore{uAF} of size $k$ for these two trees exists. Then, the treewidth of their display graph $D$ is at  most $k+1$.  
\label{theo:treewidthuMAF}
\end{theorem}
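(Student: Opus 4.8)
The plan is to build a tree decomposition of $D$ of width at most $k+1$ in three stages driven by the agreement forest; the only non-routine ingredient is a lemma saying that the display graph of two isomorphic trees is series-parallel. \textbf{Stage 1 (how the forest cuts up $D$).} Let $\cF=\{F_1,\dots,F_k\}$ be a uAF of size $k$. By definition there are edge sets $E_1\subseteq E(T_1)$ and $E_2\subseteq E(T_2)$, each of size $k-1$, with $\cF=T_1-E_1$ and $\cF=T_2-E_2$. For $j\in\{1,2\}$ let $C^j_1,\dots,C^j_k$ be the components of $T_j-E_j$ — pairwise vertex-disjoint subtrees of $T_j$ partitioning $V(T_j)$ — indexed so that $C^1_i|_{X_i}=C^2_i|_{X_i}=F_i$, where $X_i:=\mathcal L(C^1_i)=\mathcal L(C^2_i)$; the $X_i$ partition $X$. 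Since internal vertices of $T_1$ and $T_2$ are never identified, $E_1$ and $E_2$ are disjoint edge sets of $D$, and deleting $E_1\cup E_2$ from $D$ just leaves $T_1-E_1$ and $T_2-E_2$ glued along $X$. As $C^1_i$ and $C^2_{i'}$ share a label iff $i=i'$, this graph has exactly $k$ connected components $D_1,\dots,D_k$, with $D_i$ the display graph of the pair $\{C^1_i,C^2_i\}$ (i.e.\ $C^1_i\sqcup C^2_i$ with equally-labelled leaves identified). Finally, contracting each $C^1_i$ inside $T_1$ turns $E_1$ into the edges of a tree $\mathcal S_1$ on the $k$ ``super-vertices'' $1,\dots,k$; likewise $E_2$ yields a tree $\mathcal S_2$.

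\textbf{Stage 2 ($\mathrm{tw}(D_i)\le 2$).} Since $C^1_i|_{X_i}=C^2_i|_{X_i}=F_i$, each $C^j_i$ is obtained from $F_i$ by subdividing edges and attaching pendant subtrees (the subtrees ending in the unlabelled leaves created by the cuts). Hence $D_i$ is obtained in exactly the same way from $\mathrm{DD}(F_i)$, the display graph of two vertex-disjoint copies of $F_i$ glued at their leaves; neither subdividing an edge nor attaching a pendant subtree can push the treewidth above $\max(2,\cdot)$, so it suffices to prove $\mathrm{tw}(\mathrm{DD}(F))\le 2$ for every tree $F$. Contracting out the leaves of $F$ (each has degree $2$ in $\mathrm{DD}(F)$) turns $\mathrm{DD}(F)$ into a subgraph of $F^{\circ}\times K_2$, where $F^{\circ}$ is the tree on the internal vertices of $F$; because $K_4$ has maximum degree $3$, it is enough that $F^{\circ}\times K_2$ has no $K_4$-minor (and for $F$ with at most two leaves $\mathrm{DD}(F)$ has treewidth $\le 1$ anyway). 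Suppose $F^{\circ}\times K_2$ had a $K_4$-minor, witnessed by connected, pairwise-adjacent branch sets $W_1,\dots,W_4$. Projecting onto the first coordinate, each $\pi(W_a)$ is a subtree of $F^{\circ}$, and the edge linking $W_a$ to $W_b$ forces $\pi(W_a)$ and $\pi(W_b)$ to be within distance $1$ in $F^{\circ}$; hence the closed neighbourhoods $N[\pi(W_a)]$ pairwise intersect and so share a vertex $z$ by the Helly property of subtrees of a tree. Since at most two of the $W_a$ can use a vertex lying over $z$, a short case analysis (on how many $\pi(W_a)$ actually contain $z$) yields a contradiction. (Alternatively, a width-$2$ decomposition of $\mathrm{DD}(F)$ can be built by recursion on the leaves of $F$.)

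\textbf{Stage 3 (assembling the decomposition).} Take tree decompositions $\mathcal D_i$ of the $D_i$, each of width $\le 2$. Reinstate $E_1$ first: for each $f=\{u,v\}\in E_1$ with $u\in D_a$ and $v\in D_b$, create a fresh node with bag $\{u,v\}$ and splice it between some bag of $\mathcal D_a$ containing $u$ and some bag of $\mathcal D_b$ containing $v$. Because $\mathcal S_1$ is a tree, these $k-1$ splices glue the $\mathcal D_i$ into a single tree decomposition of $D-E_2$, and since every new bag has size $2$ the width is still $\le 2$. Now reinstate the $k-1$ edges of $E_2$ one by one, using the standard fact that adding a single edge raises treewidth by at most $1$ (drag the bags containing one endpoint along the decomposition tree until they reach a bag containing the other). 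This yields $\mathrm{tw}(D)\le 2+(k-1)=k+1$.

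\textbf{Where the difficulty lies.} Stages 1 and 3 are pure bookkeeping — about how the two cut-sets disconnect $D$, and about the elementary ``add-an-edge / glue-along-a-tree'' operations on tree decompositions — so the real content is the Stage-2 lemma that $\mathrm{DD}(F)$ is series-parallel. The bound is essentially tight along this route: one cut-set can be absorbed for free because it links the $D_i$ in a tree pattern, but one then pays $1$ per edge for the other cut-set, on top of the $2$ contributed by a single $D_i$; e.g.\ when $T_1=T_2$ is a caterpillar we have $k=1$ and $D$ is essentially a $2\times n$ grid, of treewidth $2=k+1$.
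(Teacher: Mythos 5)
Your proof is correct and follows essentially the same route as the paper's: a width-$2$ tree decomposition of the display graph of the forest, reinstating the first cut-set for free via fresh bags of size $2$ (these edges join distinct components in a tree pattern, so the decomposition stays a tree and the width stays $2$), and then paying $1$ per edge for the second cut-set, giving $2+(k-1)=k+1$. The only substantive divergence is in your Stage 2: the paper simply cites \cite{grigoriev2014low} for the fact that the display graph of two identical trees has treewidth $2$ (and handles the suppressed degree-$2$ vertices and unlabelled leaves by explicit bag surgery, where you instead observe that subdivision and pendant attachment preserve the bound), whereas you prove that base lemma from scratch via a $K_4$-minor-freeness argument on the prism over a tree using the Helly property of subtrees --- a nice self-contained addition, though the concluding ``short case analysis'' is left as a sketch.
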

%
%
\begin{proof}
From \cite{grigoriev2014low}, we know that the display graph of two identical trees has treewidth 2 \steven{(or 1 in the case that both trees consist of a single vertex)}. 
Thus, if we have an \smore{uAF} $\mathcal{F}=\{F_1,\dots,F_{k}\}$ of size $k$, this means that the display graph $D_0$ of  $\mathcal{F}$ \steven{(which we define as the display
graph constructed from two disjoint copies of $\mathcal{F}$)} has $k$ connected components, and treewidth \steven{at most} 2. \steven{This is because the treewidth of a disconnected graph is equal to the largest treewidth ranging over its connected components.} Now, we can construct a tree decomposition of $D$ from the tree decomposition of  $\mathcal{F}$ as follows: suppose $\mathcal{F}$ can be obtained by removing from $T_1$, respectively $T_2$, a subset of edges $K_1$, respectively $K_2$, and suppressing vertices with degree 2 and unlabeled leaves. First, note that we can reintroduce the suppressed vertices (and their corresponding edges) in $\mathcal{F}$, obtaining a new forest $\mathcal{F}'$, without changing the treewidth. Indeed, given an edge $\{u,v\}$ in  $\mathcal{F}$ that corresponded  to a path  $(u,x_1, \cdots, x_j, v)$ before the suppression of the vertices  with degree 2, we know that there exists a bag $B$ in the tree decomposition of $D_0$ such that $u$ and $v$ are in $B$. Then we can add a set of bags $\{B_1, \cdots, B_j\}$ such that $B_1=\{u,x_1,v\}$, $B_2=\{x_1, x_2,v\}$, $\cdots$ , $B_j=\{x_{j-1}, x_j,v\}$, and 
\steven{add edges $\{B, B_1\}$, $\{B_1, B_2\}$, $\cdots$, $\{B_{j-1}, B_j\}$ to the
tree decomposition}.
For the suppressed unlabeled leaves, say $u$, this is even easier: we add a bag  $\{u,v\}$ as child of any of the bags containing $v$, where $v$ is the vertex from which the suppressed leaf was hanging. It is easy to see that this is a  tree decomposition  of the display graph of $\mathcal{F}'$ with treewidth 2. Now, we can easily reintroduce the $k-1$ edges in $K_1$ to the display graph, again without changing the treewidth, by,  for each  edge $\{u,v\}$ in  $K_1$,  adding a bag $\{u,v\}$ between two \smore{existing} bags, one containing $u$ and the other containing $v$. Note that the obtained decomposition is still a tree, since we are connecting two components of $\mathcal{F}'$. Now, when adding back the edges of $K_2$, this is not true anymore. In this case, there exists at least a path in the tree decomposition, connecting a bag containing $u$ to a bag containing $v$. Then, taking the shortest of these paths and adding $u$ to its  bags not containing  $u$, we increase the treewidth \smore{by at most} 1. If we do this for all edges in $K_2$, we obtain a tree decomposition for the display graph of $T_1$ and $T_2$ with treewidth \steven{at most} $2+ (k-1)=k+1$. \steven{Note that this bound is tight, as the following example
shows: an uMAF of two quartets with different topologies, $uv|wx$ and $ux|vw$ say,
contains 2 components, and the display graph of these two quartets has treewidth 3 (see also \cite{grigoriev2014low}).}
$\qed$
\end{proof}

\steven{In the following, we will demonstrate that $|uMAF(T_1,T_2)| =: k$ can be computed in time $O( f(k) \cdot |X| )$ for some computable function $f$ that depends only on $k$. We do this via the machinery of MSOL. The high-level idea is that
we formulate a logical query to answer the question \emph{``Is $k \leq k'$?''} for increasing values of $k'$ until the answer is \emph{yes}, and then
stop: at this point $k' = k$. } \cs{\sthursday{We use the stronger variant} of MSOL \smore{that allows quantification over both edges and vertices}\commentOut{, known as MSO$_2$}. In particular, we will use the \emph{extended} MSOL framework of Arnborg et al \cite{Arnborg91}. Following \cite{compatibility2006,strictcompatibility2014}} we
\steven{note that the sets $V_1, E_1, V_2, E_2, X$ \smore{(and later, $\rho$)} are all available to the MSOL query i.e.
within the query we can distinguish which vertices/edges of $D$ belong to $T_1$, which belong to $T_2$, and which are taxa.}

\steven{More formally,} we construct an MSOL formula \smore{$\Phi(K_1, K_2)$} and a relational structure $\textbf{G}$ such that 
 ${\bf G} \models \smore{\Phi(K_1, K_2)}$ if and only if
\smore{$K_1$ is a set of $k'-1$ edges of $E_1$, and $K_2$ is a set of $k'-1$ edges of $E_2$,}
such that, after deleting them, the resulting components form an \cswg{uAF} $\mathcal{F}$ for both $T_1$ and $T_2$, i.e. $\mathcal{F}_1=T_1-K_1=\mathcal{F}_2=T_2-K_2$. To model this, we need to have that: (1) \steven{the two forests 
$\mathcal{F}_1$ and $\mathcal{F}_2$ induce an identical partition of $X$}
and (2) the components of the two induced forests must have the same topology. To enforce (1) we observe
that (in, say, $T_1$) two taxa
\smore{$x_1$ and $x_2$}
are in the same component of the forest resulting from
deletion of $K_1$ if and only if they can still reach each other inside $T_1$ after
deletion of those edges. \steven{In turn, this occurs} if and only if there is a path from $x_1$ to $x_2$ entirely contained
inside $T_1$ which avoids all the edges in $K_1$. To enforce (2) we demand that
a quartet is in the first forest \steven{(i.e. the quartet is contained inside one of the
trees in the forest)} if and only the quartet is in the second forest.
\steven{This uses the fact that two unrooted binary  trees on the same set of
taxa are topologically identical if and only if they induce identical sets of quartets \cite{buneman1971recovery}}.

Before defining  \smore{$\Phi(K_1, K_2)$}, we need to introduce several intermediate predicates. These
build on a number of basic predicates \steven{which we mainly list for the benefit of readers
not familiar with MSOL}. They are used to: 
\begin{itemize}
\item \steven{test that $Z$ is equal to} the union of two sets $P$ and $Q$: \\
$P \cup Q = Z := \forall z ( z \in Z \Rightarrow z \in P \vee z \in Q)
\wedge
\forall z ( z \in P \Rightarrow z \in Z)
\wedge\forall z ( z \in Q \Rightarrow z \in Z)$\\
\item  test that $P \cap Q = \emptyset$: \\
$NoIntersect(P,Q) := \forall u \in P( u \not \in Q $)\\
\item test that $P \cap Q = \{v\}$:\\
$Intersect(P,Q,v) := (v \in P) \wedge (v \in Q) \wedge \forall u \in P( u \in Q \Rightarrow (u  = v) )$\\
\item test if the sets $P$ and $Q$ are a bipartition of $Z$:  \\
$Bipartition(Z, P, Q) := (P \cup Q = Z) \wedge NoIntersect(P,Q)$\\
\item test if the elements in $\{x_1, x_2, x_3, x_4\}$ are pairwise different: \\
$allDiff( x_1, x_2, x_3, x_4 ) := \bigwedge_{i \neq j \in \{1,2,3,4\}} x_i \neq x_j$\\
\item check if the nodes $p$ and $q$ are adjacent in $D$: \\
$adj(p,q) := \exists e \in E ( R^{D}(e,p) \wedge R^{D}(e,q))$
\end{itemize}


The predicate 
$PAC(Z, x_1, x_2, K_i)$  \steven{(\emph{``path avoids cuts?'}')} asks: is there a path from 
$x_1$ to $x_2$ entirely contained inside \steven{vertices} $Z$ that avoids all the edges $K_i$? We 
model this by observing that this does \emph{not} hold if you can partition $Z$ into two
pieces $P$ and $Q$, with $x_1 \in P$ and $x_2 \in Q$, such that the only edges that
cross the induced cut (if any) are in $K_i$:
\begin{eqnarray*}
PAC(Z, x_1, x_2, K_i) &:=& (x_1 = x_2) \vee \neg \exists P, Q ( Bipartition(Z,P,Q) \wedge x_1 \in P \wedge x_2 \in Q \wedge\\ &&(\forall p, q (p \in P \wedge q \in Q \Rightarrow \neg adj( p,q) \vee (
\exists g \in K_i( R^{D}(g, p)\\
&& \wedge R^{D}(g,q))))))
\end{eqnarray*}

We model
that a quartet is in the forest (of, say, $T_1$) \steven{by stipulating} that there is an embedding \steven{(i.e. subdivision)} of the
quartet, \steven{completely contained inside $T_1$}, which avoids all the edges in $K_1$. To model the embedding, we model
the five edges of the quartet as five subsets of vertices $A,B,C,D,P$, representing the subdivisions of the five edges, with $P$ being the
central edge and $u$ and $v$ being its endpoints. We demand that (with the exception of $u$
and $v$) these subsets are disjoint. \steven{This is all combined in the following $QAC^{1}$ predicate (\emph{``quartet avoids cuts in $T_1$?'')},
which returns true if and only if $T_1$ contains an embedding of $x_a x_b | x_c x_d$ that
is disjoint from the edge cuts $K_1$.}
\begin{eqnarray*}
QAC^{1}(x_a, x_b, x_c,x_d, K_1) &:=&
\exists u, v \in V_1 ( (u \neq v) \wedge \exists A,B,C,D,P \subseteq V_1 ( x_a, u \in A \wedge x_b, u \in B \wedge x_c,\\&& v \in C \wedge x_d,v \in D \wedge u \in P \wedge v \in P \wedge Intersect(A,B,u) \wedge\\ && Intersect(A,P,u) \wedge Intersect(B,P,u) \wedge Intersect(C,D,v) \wedge \\&&  Intersect(C,P,v) \wedge Intersect(D,P,v) \wedge   NoIntersect(A,C) \wedge \\&&  NoIntersect(B,C)\wedge NoIntersect(A,D)\wedge NoIntersect(B,D)   \wedge \\&& PAC(A, u, x_a, K_1)
\wedge PAC(B, u, x_b, K_1)
\wedge PAC(C, v, x_c, K_1) \wedge \\&&
 PAC(D, v, x_d, K_1)  \wedge
 PAC(P, u, v, K_1))) 
\end{eqnarray*}
We can define $QAC^{2}(x_a, x_b, x_c,x_d, K_2)$  in a similar way.  Note that, for every four taxa, we need to consider
all three possible quartet topologies. Then we define \smore{$\Phi(K_1, K_2)$} as follows:
\begin{eqnarray*}
(\bigwedge_{i\in \{1,2\}}|K_i| = k'-1) & \wedge& (\bigwedge_{i\in \{1,2\}}K_i \subseteq E_i)
\wedge \forall x_1, x_2 \in X(PAC(V_1, x_1, x_2, K_1)  \Leftrightarrow \\&& 
 PAC(V_2, x_1, x_2, K_2) ) \wedge \forall x_1, x_2, x_3, x_4 \in X( allDiff(x_1, x_2, x_3, x_4) \Rightarrow \\
&&((QAC^{1}(x_1, x_2, x_3, x_4, K_1) \Leftrightarrow QAC^{2}(x_1, x_2, x_3, x_4, K_2) )\wedge \\
&& (QAC^{1}(x_1, x_3, x_2, x_4, K_1 \Leftrightarrow QAC^{2}(x_1, x_3, x_2, x_4, K_2) ) \wedge \\
&& (QAC^{1}(x_1, x_4, x_2, x_3, K_1) \Leftrightarrow QAC^{2}(x_1, x_4, x_2, x_3, K_2) ))).\\
\end{eqnarray*}
\smore{(The cardinality operator is permitted because the extended MSOL framework of \cite{Arnborg91} allows the incorporation of
an \emph{evaluation relation} which can test, amongst other things, the cardinalities of free set variables).}

\begin{theorem}
Computation of TBR / uMAF on two 
unrooted binary trees on the same set of taxa $X$ is linear time FPT. \steven{That is,
the optimum $k$ can be computed in time $O( f(k) \cdot |X| )$ for some computable
function that only depends on $k$.}
\label{theo:TBR}
\end{theorem}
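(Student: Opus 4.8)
The plan is to assemble three ingredients that are already in place: the treewidth bound of Theorem~\ref{theo:treewidthuMAF}, the MSOL formula obtained by existentially closing $\Phi(K_1,K_2)$, and the algorithmic meta-theorem of Arnborg et al.~\cite{Arnborg91} for extended MSOL (combined with Bodlaender's linear-time treewidth algorithm~\cite{Bodlaender96}). Let $\textbf{G}$ be the relational structure consisting of the display graph $D$ together with the incidence relation $R^D$, the adjacency relation $adj$, and the unary predicates distinguishing $V_1,E_1,V_2,E_2$ and $X$; note that $|D| = \Theta(|X|)$. The first step is to check that, for a fixed integer $k' \ge 1$, the sentence $\Psi_{k'} := \exists K_1, K_2\, \Phi(K_1,K_2)$ holds in $\textbf{G}$ if and only if $T_1$ and $T_2$ admit a uAF of size exactly $k'$. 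This is precisely the content of the informal discussion preceding the definition of $\Phi$; the two points that need to be verified carefully are (i) that $PAC(V_i, x_1, x_2, K_i)$ expresses that $x_1$ and $x_2$ lie in the same component of $T_i - K_i$ (via the standard ``no separating cut'' reformulation already given), and (ii) that the $QAC^{i}$ predicates, through the disjointness constraints on $A,B,C,D,P$, correctly express that a given quartet topology is displayed by some component of $T_i - K_i$. Together with the quartet characterisation of unrooted binary trees~\cite{buneman1971recovery}, this yields $\mathcal{F}_1 = T_1 - K_1 = T_2 - K_2 = \mathcal{F}_2$, i.e.\ a genuine uAF. The length of $\Psi_{k'}$ and the cardinality parameter $k'-1$ fed to the evaluation relation are both bounded by functions of $k'$ alone.

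The second step is the search loop. Since $|uMAF(T_1,T_2)| = k$ is by definition the minimum size of any uAF, the sentence $\Psi_j$ is false for every $j < k$ and true for $j = k$; hence $k$ is the least value of $k'$ for which $\textbf{G} \models \Psi_{k'}$. So I would run, for $k' = 1, 2, 3, \ldots$, a test of $\textbf{G} \models \Psi_{k'}$, stopping at the first success and returning that value as $k$. By Theorem~\ref{thm:tbrIsMaf}, the TBR distance is then $k-1$. The loop performs exactly $k$ iterations.

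The one genuine obstacle is a chicken-and-egg issue: the meta-theorem of~\cite{Arnborg91} (like Courcelle's theorem, and keeping in mind that $\Phi$ quantifies over edge sets) requires the input graph to have bounded treewidth, whereas a priori nothing stops $D$ from having large treewidth. This is exactly where Theorem~\ref{theo:treewidthuMAF} is used. At iteration $k'$, I would first run Bodlaender's algorithm~\cite{Bodlaender96} to decide, in time $h(k')\cdot O(|X|)$, whether $\mathrm{tw}(D) \le k'+1$, and if so to produce a tree decomposition of that width. If $\mathrm{tw}(D) > k'+1$, the contrapositive of Theorem~\ref{theo:treewidthuMAF} guarantees that no uAF of size $k'$ exists, so we proceed to $k'+1$ without invoking the MSOL machinery at all; this is consistent, and the loop still reaches $k' = k$ because $\mathrm{tw}(D) \le k+1$ by that same theorem. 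Otherwise, we feed the tree decomposition, the structure $\textbf{G}$, and the sentence $\Psi_{k'}$ to the algorithm of~\cite{Arnborg91}, which decides $\textbf{G} \models \Psi_{k'}$ in time $f\big(\mathrm{tw}(D), |\Psi_{k'}|\big)\cdot O(|X|)$; since $\mathrm{tw}(D) \le k'+1$ and $|\Psi_{k'}|$ depends only on $k'$, this is $f'(k')\cdot O(|X|)$. Summing over the at most $k$ iterations gives a total running time of $\big(\sum_{k'=1}^{k} (h(k') + f'(k'))\big)\cdot O(|X|) = g(k)\cdot O(|X|)$ for some computable $g$, which is the claimed bound. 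Finally, if one wants to output an actual uMAF (and hence, via Definition~\ref{def:uMAF} and Theorem~\ref{thm:tbrIsMaf}, a shortest TBR sequence) rather than just its size $k$, it suffices to note that the framework of~\cite{Arnborg91}, like the constructive versions of Courcelle's theorem, returns within the same time bound a satisfying assignment to the free set variables $K_1, K_2$ of $\Phi$, from which $\mathcal{F}$ is obtained by deleting those edges and suppressing unlabelled leaves and degree-$2$ vertices.
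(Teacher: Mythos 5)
Your proposal is correct and follows essentially the same route as the paper: iterate $k'$ upward, test the existentially closed $\Phi(K_1,K_2)$ on the display graph via the extended MSOL meta-theorem of \cite{Arnborg91} with \cite{Bodlaender96}, and invoke Theorem~\ref{theo:treewidthuMAF} for the treewidth bound. Your explicit handling of the ``chicken-and-egg'' issue --- first running Bodlaender's algorithm and using the contrapositive of Theorem~\ref{theo:treewidthuMAF} to skip iterations where $\mathrm{tw}(D) > k'+1$ --- is a welcome piece of care that the paper's own two-line proof leaves implicit.
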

\begin{proof}
We have presented a logical query to answer the question \emph{``Is $k \leq k'$?''} for increasing values of $k'$. 
For each value of $k'$ the MSOL query, which examines the display graph $D$, has \smore{fixed length}.
Combining this with the fact that the treewidth of $D$ is bounded by
a function of $k$ (by Theorem \ref{thm:twbound}), and that the size of $D$ is a linear function of $|X|$, we have
\smore{the desired result}. (Note that the actual edge cuts - which can be used
to construct a uMAF - can also be obtained in the same time bound by leveraging Theorem 4
of \cite{compatibility2006}.)
\qed

\end{proof}

\commentOut{
Note that one could give an alternative logical formulation without free variables, if desired. That is, one could
\sblah{completely remove $K_1$ and $K_2$ and the cardinality test from the formulation by instead
explicitly introducing nested existential quantifiers, selecting individual edges, to a depth of $2(k'-1)$. The usual
logical constructions can then be used to test that the edges are distinct, that $k'-1$ of them are from $E_1$,
and $k'-1$ are from $E_2$.} Such a formulation
has the disadvantage that the length of the logical query grows as a function of $k'$. However, the length of the query remains bounded by a function of $k$, so
the same FPT result would be obtained in this case.}

\subsection{rSPR / MAF on rooted trees}

In this section, we will give a compact proof that the computation of rSPR distance is FPT in \steven{its} natural parameter. Before that, we need to introduce some definitions.

\begin{definition}[rSPR move]
Given a rooted binary tree $T$,  a \emph{subtree prune and regraft} (rSPR) move on $T$ consists of removing an edge of $T$, say $(u,v)$, yielding two trees $T_u$ and $T_v$,
and then reconnecting them as follows: subdividing some edge of \smore{$T_u$} with a new vertex p; adding an edge
directed from $p$ to \smore{$v$}, and then suppressing any vertices with indegree and outdegree both equal to 1.
\end{definition}

\steven{rSPR distance is defined analogously to TBR distance,} and a MAF for two rooted binary trees $T_1$, $T_2$ is defined \steven{similarly to} a uMAF, but in a rooted framework.  \steven{We refer to \cite{Bordewich2004} for precise definitions. The
main difference is that a forest consists of \emph{rooted} binary trees and this has to
be taken into account when comparing the topology of the components.} 
\steven{In the rooted context MAFs are mainly studied because of their close relationship
to rSPR distance. To accurately model rSPR distance it is necessary to slightly modify each
input tree $T_i$ as follows: we add a vertex with special label $\rho$ at the end of a pendant edge adjoined to the original root of $T_i$, \sthursday{see Figure \ref{fig:displaygraph}(c).} We then consider $\rho$ to be part of the label
set of the tree. Note that the addition of $\rho$ means that we can equivalently view each $T_i$ as an unrooted binary tree, with $\rho$ acting as a placeholder for the root location, and
this is how the trees will be modelled in the display graph.}

\steven{The close relationship between MAF (assuming $\rho$
has been added as described) and rSPR distance  is summarized by the following well-known result.}

\begin{theorem}[\cite{Bordewich2004}]
Given two rooted binary trees $T_1$, $T_2$ on the same set of taxa $X$, we have that  $d_{rSPR}(T_1, T_2)=|MAF(T_1, T_2)|-1$.
\end{theorem}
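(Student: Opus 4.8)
The plan is to establish the two inequalities $|MAF(T_1,T_2)|-1\le d_{rSPR}(T_1,T_2)$ and $d_{rSPR}(T_1,T_2)\le |MAF(T_1,T_2)|-1$ separately, as a rooted transcription of the argument behind Theorem~\ref{thm:tbrIsMaf}. The common set-up is the $\rho$-device introduced above: since each $T_i$ carries the placeholder leaf $\rho$, a forest for the rooted tree $T_i$ behaves like a forest for an unrooted tree whose $\rho$-component records where the root lies, so comparing the topologies of two forests — roots included — reduces to comparing ordinary rooted induced subtrees component by component. For a forest $\mathcal{F}$ of $T_i$ I will write $T_i^{\mathcal{F}}$ for the \emph{component tree} obtained by contracting each component of $\mathcal{F}$ inside $T_i$ to a single vertex; its vertex set is $\mathcal{F}$, and the degree of a component there equals the number of deleted edges incident to it.

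For the lower bound I would induct on $d:=d_{rSPR}(T_1,T_2)$. The base case $d=0$ gives $T_1=T_2$ and the agreement forest $\{T_1\}$. For the step, let $T_1\to\hat T$ be the first move of an optimal sequence, so $d_{rSPR}(\hat T,T_2)=d-1$ and by induction $|MAF(\hat T,T_2)|\le d$; it then suffices to prove $|MAF(T_1,T_2)|\le|MAF(\hat T,T_2)|+1$. The move relating $T_1$ and $\hat T$ exhibits a common two-component forest $\{A,B\}$ of $T_1$ and $\hat T$. Given a maximum agreement forest $\mathcal{F}$ of $\hat T$ and $T_2$, at most one component of $\mathcal{F}$ can straddle the single edge of $\hat T$ whose removal produces $\{A,B\}$ — the subtrees realizing distinct components of $\mathcal{F}$ are edge-disjoint — so refining $\mathcal{F}$ along that cut adds at most one component. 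The refinement stays a forest of $T_2$, and it is a forest of $T_1$ because $\{A,B\}$ is a forest of $T_1$ and an rSPR move leaves the induced topology on the leaves of the non-moved half unchanged, i.e.\ $T_1|_{\mathcal{L}(A)}=\hat T|_{\mathcal{L}(A)}$ and likewise for $B$. This produces an agreement forest of $T_1$ and $T_2$ of size at most $|\mathcal{F}|+1\le d+1$.

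For the upper bound I would induct on $k:=|MAF(T_1,T_2)|$. When $k=1$ the unique common component is all of each tree, so $T_1\cong T_2$ (the $\rho$-leaf pins down the root) and $d_{rSPR}=0$. For $k\ge2$ it suffices to find one rSPR move $T_1\to T_1'$ with $|MAF(T_1',T_2)|\le k-1$, since then $d_{rSPR}(T_1,T_2)\le 1+d_{rSPR}(T_1',T_2)\le 1+(k-2)=k-1$. To construct such a move, root the component tree $T_2^{\mathcal{F}}$ of a maximum agreement forest $\mathcal{F}$ at the component $C_0$ containing $\rho$, take a leaf component $C_j\ne C_0$ with parent $C_i$, so in $T_2$ the part realizing $C_j$ hangs off a single point of the part realizing $C_i$. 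If the region of $C_j$ is pendant in $T_1$, a single rSPR move detaches it and regrafts it at the matching point of $C_i$'s region in $T_1$; replacing $C_i,C_j$ in $\mathcal{F}$ by the merged subtree $T_2|_{\mathcal{L}(C_i)\cup\mathcal{L}(C_j)}$ yields an agreement forest of $T_1'$ and $T_2$ of size $k-1$. If $C_j$'s region is not pendant in $T_1$, one must first relocate the components that sit inside it in $T_1$, and here minimality of $\mathcal{F}$ is used to guarantee that each move strictly decreases the MAF size.

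I expect the lower bound to be routine once the fact that at most one component of a forest straddles any single cut has been isolated — the rest is bookkeeping and parallels the proof of Theorem~\ref{thm:tbrIsMaf}. The main obstacle is the lemma driving the upper bound, namely that whenever $T_1\ne T_2$ there is an rSPR move $T_1\to T_1'$ with $|MAF(T_1',T_2)|=|MAF(T_1,T_2)|-1$: a component that is pendant, and hence rSPR-movable, in $T_1$ need not be pendant in $T_2$ and vice versa, so the move cannot simply be read off either component tree, and reconciling the two while exploiting minimality of the agreement forest so that no move is wasted is the delicate step. A persistent secondary issue is threading the $\rho$-placeholder so that detached and re-merged pieces remain legitimate rooted binary subtrees, with the $\rho$-component always acting as the root side.
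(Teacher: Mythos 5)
The paper does not prove this statement: it is quoted directly from Bordewich and Semple \cite{Bordewich2004}, so there is no in-paper argument to compare yours against. Judged on its own terms, your sketch of the direction $|MAF(T_1,T_2)|-1\le d_{rSPR}(T_1,T_2)$ is essentially the standard argument and is sound: one rSPR move exhibits a common two-component forest $\{A,B\}$ of $T_1$ and $\hat T$, at most one component of an agreement forest of $\hat T$ and $T_2$ can use the relevant edge of $\hat T$ (the spanning subtrees are vertex-disjoint), and the induced split of that component is along an edge of its common topology, so the refined partition remains realizable by edge deletions in $T_2$ and in $T_1$. That last point deserves to be written out: an \emph{arbitrary} bipartition of a component's leaf set would not keep the two pieces' spanning subtrees in $T_2$ vertex-disjoint; it is only because the split is edge-induced in the (shared) component topology that the refinement is still an agreement forest.

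The genuine gap is in the direction $d_{rSPR}\le|MAF|-1$, and it sits exactly where you flag it. You choose the component to move as a leaf $C_j$ of the component tree of $T_2$, and then have no argument when the region realizing $C_j$ is not pendant in $T_1$; "first relocating the components that sit inside it" costs additional rSPR moves, and appealing to minimality of $\mathcal{F}$ to claim that each such move still strictly decreases the MAF size is unsubstantiated -- this is not how the count is made to close. The standard repair is to select the component by reference to $T_1$, not $T_2$: among the non-$\rho$ components, take one whose spanning subtree in $T_1$ has its root at maximum depth. Vertex-disjointness of the spanning subtrees forces any other component nested below that root to have a strictly deeper root, so a deepest component is automatically pendant in $T_1$ and can be pruned in a single move; it is then regrafted at the position dictated by the unique deleted edge joining it upwards to its parent component in $T_2$, and un-deleting that edge merges the two components into an agreement forest of $T_1'$ and $T_2$ of size $k-1$. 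Note that this argument uses no minimality at all: \emph{any} agreement forest of size $k$ certifies $d_{rSPR}\le k-1$, which is a cleaner induction statement than the one you set up and removes the need to reason about why a move is never "wasted".
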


Note that these problems have been proved NP-hard  and FPT in their natural parameter \cite{Bordewich2004}.

The \steven{MSOL} formulation is similar to the TBR formulation, but with the following changes.  When checking that the components \steven{induced by the edge cuts partition the taxa in the same way in both $T_1$ and $T_2$} (i.e. by considering pairs of taxa that still have a path between them), we need to range over $X \cup \steven{\{ \rho \}}$ instead of just $X$. More significantly, we need \steven{predicates for} triplets instead of quartets, because we are working in the rooted environment \steven{and two rooted binary trees are topologically equivalent if and only if they contain the same set of triplets \cite{bryant1995extension}. Fortunately we can use the fact that
triplet $xy|z$ is in $T_i$ ($x,y,z \in X$) if and only if quartet $xy|{\rho}z$ is in the unrooted interpretation of $T_i$.}

 However we cannot simply use $\rho$ as the fourth parameter $x_d$ to
\steven{$QAC^{i}$} because this will evaluate to false if the path from $\rho$ to the rest of the quartet \steven{embedding} has been cut. This is not what we need: $\rho$ is \steven{in this context} only there to indicate direction,
so its particular arm of the quartet embedding can be cut without consequence. We can remedy this by introducing \steven{predicates} $Quartet^i$ and $Triplet^i$ which check whether the corresponding
quartet/triplet was in the \emph{original} tree (i.e. before the edge cuts). \steven{We can
then leverage the fact that, if three distinct taxa $x,y,z$ are in the same component of the
forest, the unique triplet topology they induce within the component will be the same topology as they induced
in the original tree.}\\
\\
\steven{We first need the following predicate, which is} a specialization of \steven{the earlier $PAC$ predicate}. It tests whether there is a path from $x_1$ to $x_2$ that is entirely contained inside vertex set $Z$: 
\begin{eqnarray*}
path(Z, x_1, x_2) &:=& (x_1 = x_2) \vee \neg \exists P, Q ( Bipartition(Z,P,Q) \wedge x_1 \in P \wedge x_2 \in Q\\&&
 \wedge (\forall p, q (p \in P \wedge q \in Q \Rightarrow \neg adj( p,q) )))
\end{eqnarray*}

For each tree $T_i$, the following \steven{predicate} checks whether the quartet $x_a x_b | x_c x_d$ is \steven{contained in} $T_i$:
\begin{eqnarray*}
Quartet^{i}(x_a, x_b, x_c,x_d) &:=& 
\exists u, v \in V_i ( (u \neq v) \wedge \exists A,B,C,D,P \subseteq V_i ( x_a, u \in A \wedge \\
&& x_b,  u \in B \wedge x_c, v \in C \wedge x_d,v \in D \wedge u \in P \wedge v \in P \wedge Intersect(A,B,u) \wedge \\
&& Intersect(A,P,u) \wedge Intersect(B,P,u) \wedge Intersect(C,D,v)  \wedge \\
&& Intersect(C,P,v) \wedge Intersect(D,P,v) \wedge NoIntersect(A,C) \wedge  \\
&& NoIntersect(B,C)\wedge NoIntersect(A,D)\wedge NoIntersect(B,D)  \wedge \\
&& path(A, u, x_a) 
\wedge path(B, u, x_b)
\wedge path(C, v, x_c)
\wedge path(D, v, x_d) \wedge \\
&&  path(P, u, v)))
\end{eqnarray*}

For each rooted tree $T_i$, the following \steven{predicate} checks whether the triplet $x_a x_b | x_c$ is \smore{contained in} $T_i$ (simply by checking whether $x_a x_b | x_c \rho$ is \steven{contained in} it): 
\begin{eqnarray*}
Triplet^{i}(x_a, x_b, x_c) := Quartet^{i}( x_a, x_b, x_c, \rho)\\
\end{eqnarray*}
Now, we are ready to define $TAC^{i}$ \steven{(\emph{``triplet avoids cuts in $T_i$?''})}, which models whether a triplet is in the forest of $T_i$ \steven{induced by the edge cuts}:

\begin{eqnarray*}
TAC^{i}(x_a, x_b, x_c,  K_1) &:=& Triplet^{i}(x_a, x_b, x_c) \wedge 
\exists u, v \in V_i ( (u \neq v) \wedge \exists A,B,C,D,P \subseteq V_i ( x_a, u \in A \wedge \\
&&  x_b, u \in B \wedge x_c, v \in C \wedge \rho,v \in D \wedge u \in P \wedge v \in P \wedge Intersect(A,B,u) \wedge \\
&&  Intersect(A,P,u) \wedge Intersect(B,P,u) \wedge Intersect(C,D,v) \wedge \\
&&  Intersect(C,P,v) \wedge Intersect(D,P,v) \wedge NoIntersect(A,C) \wedge \\
&& NoIntersect(B,C)\wedge NoIntersect(A,D)\wedge NoIntersect(B,D) \\ 
&& \wedge PAC(A, u, x_a,  K_1)
\wedge PAC(B, u, x_b, K_1)
\wedge PAC(C, v, x_c,  K_1) \wedge\\
&& 
 path(D, v, \rho,  K_1) 
\wedge PAC(P, u, v,  K_1))) 
\end{eqnarray*}

\smore{Note how we use \emph{path} rather than $PAC$ to model the path from $v$ to $\rho$ i.e. because it does not matter for the triplet whether this path is cut.}
\smore{The final MSOL formulation is then very similar to that given in Section \ref{subsec:tbr}:}


\begin{eqnarray*}
(\bigwedge_{i\in \{1,2\}}|K_i| = k'-1) &\wedge& (\bigwedge_{i\in \{1,2\}}K_i \subseteq E_i)
\wedge \forall x_1, x_2 \in X \cup \{ \rho \}(PAC(V_1, x_1, x_2, K_1) \Leftrightarrow \\&& 
PAC(V_2, x_1, x_2, K_2) ) \wedge  \forall x_1, x_2, x_3 \in X( allDiff(x_1, x_2, x_3) \Rightarrow \\
&&((TAC^{1}(x_1, x_2, x_3, K_1) \Leftrightarrow TAC^{2}(x_1, x_2, x_3, K_2) )\wedge \\
&& (TAC^{1}(x_1, x_3, x_2,  K_1) \Leftrightarrow TAC^{2}(x_1, x_3, x_2,  K_2) ) \wedge \\
&& (TAC^{1}(x_2, x_3, x_1,  K_1) \Leftrightarrow TAC^{2}(x_2, x_3, x_1, K_2) ))).\\
\end{eqnarray*}

\begin{theorem}
Computation of rSPR / MAF on two 
rooted binary trees on the same set of taxa $X$ is linear time FPT. 
\steven{That is,
the optimum $k$ can be computed in time $O( f(k) \cdot |X| )$, for some computable
function that only depends on $k$.}
\end{theorem}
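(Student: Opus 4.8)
The plan is to follow the proof of Theorem~\ref{theo:TBR} essentially verbatim, with the triplet machinery of this section in place of the quartet machinery. Three ingredients are needed: (i) correctness of the MSOL formulation displayed above; (ii) a bound on the treewidth of the display graph $D$ in terms of $k := |MAF(T_1,T_2)|$; and (iii) the extended Courcelle-type meta-theorem of Arnborg et al.~\cite{Arnborg91} (together with~\cite{Bodlaender96}), which evaluates a fixed-length sentence on a bounded-treewidth graph in time linear in the graph's size.

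For (i) I would verify that, for each fixed $k'$, the relational structure encoding $D$ satisfies the displayed sentence precisely when $K_1 \subseteq E_1$, $K_2 \subseteq E_2$, $|K_1| = |K_2| = k'-1$, and deleting these edges from the $\rho$-augmented trees $T_1,T_2$ (viewed as unrooted binary trees) yields a rooted agreement forest of size $k'$. The $PAC$ clauses ranging over $X \cup \{\rho\}$ enforce, exactly as in Section~\ref{subsec:tbr}, that the two induced forests partition $X \cup \{\rho\}$ identically; it then remains to check that the two forests induce the same rooted topology on each block, which by~\cite{bryant1995extension} is equivalent to displaying the same triplet on every triple of taxa lying in a common block. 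The delicate point is that $TAC^{i}$ faithfully encodes the predicate ``the component of the forest of $T_i$ containing $x_a,x_b,x_c$ displays the triplet $x_ax_b|x_c$'': the conjunct $Triplet^{i}(x_a,x_b,x_c)$ certifies that this triplet was present in the original tree, while the embedding conjuncts certify that a subdivision of it (with a dummy arm towards $\rho$) survives the cuts $K_i$, using $PAC$ on the three genuine arms and on the central edge but only $path$ on the $\rho$-arm. It is precisely this relaxation that permits the arm towards $\rho$ to be severed, as it must be, since within a component not containing $\rho$ the placeholder root location is irrelevant. Together with the facts that a triple of taxa in a common binary component displays exactly one triplet and that this is the triplet inherited from the original tree, this yields correctness.

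For (ii) I would observe that a rooted agreement forest of size $k$ for $T_1,T_2$ induces, after $\rho$-augmentation and passage to the unrooted encodings, an unrooted agreement forest of size $k$ for the two resulting unrooted binary trees on $X \cup \{\rho\}$: the same $k-1$ edge deletions produce $k$ components in each tree with identical induced partition of $X \cup \{\rho\}$, and a rooted isomorphism of two rooted components is in particular a leaf-labelled isomorphism of the underlying unrooted trees. Hence Theorem~\ref{thm:twbound} applies and $D$ has treewidth at most $k+1$.

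With (i), (ii) and (iii) in place the argument closes as for Theorem~\ref{theo:TBR}: for each $k'$ the displayed sentence is assembled from finitely many fixed-length predicates and so has length bounded by a function of $k'$; $D$ has $O(|X|)$ vertices and edges; and running the test for $k' = 1,2,\dots$ halts at $k'=k$ after at most $k$ iterations, giving total running time $O(g(k)\cdot|X|)$ for a computable $g$. As in the TBR case, the witnessing cuts -- hence an explicit MAF and, via the preceding theorem, an rSPR sequence -- can be recovered within the same bound using Theorem~4 of~\cite{compatibility2006}. The main obstacle is step (i): checking that forcing $x_d=\rho$ and weakening its arm from $PAC$ to $path$ in $TAC^{i}$ neither admits spurious triplet topologies nor blocks genuine ones. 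Everything else is a routine transcription of the unrooted argument.
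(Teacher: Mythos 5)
Your proposal is correct and follows essentially the same route as the paper: the treewidth bound comes from observing that a rooted agreement forest of size $k$ yields an unrooted agreement forest of the same size on the $\rho$-augmented unrooted interpretations (so Theorem~\ref{thm:twbound} applies), and the rest is the Arnborg et al.\ meta-theorem applied to the triplet-based formulation already displayed in the text. Your additional care about why the $\rho$-arm of $TAC^{i}$ uses $path$ rather than $PAC$ matches the paper's own (more informal) justification and adds useful detail, but does not change the argument.
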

\begin{proof}
\steven{An agreement forest of the two rooted trees $T_1$ and $T_2$ induces
an agreement forest (consisting of unrooted binary trees) of the same size of the unrooted interpretations of these
trees, simply by ignoring the orientation of edges. Hence the treewidth bound
described in Theorem  \ref{theo:treewidthuMAF} is still applicable, and} the theorem follows. (Again, if required one can obtain the actual edge cuts, which can be used to build a MAF, in the same time bound by leveraging Theorem 4
of \cite{compatibility2006}). \qed
\end{proof}

\subsection{Hybridization Number}

In this section, we deal again with rooted trees, and thus we add a vertex \sthursday{labeled $\rho$ to both trees to} indicate the root location, as done \sthursday{for rSPR; see Figure \ref{fig:displaygraph}(c)}. A {\it rooted  phylogenetic network} (\cs{rooted network for short}) $N=(V(N),E(N))$ on a set of taxa $X$ is any rooted acyclic digraph in which no vertex has degree 2 (except possibly the root) and whose leaves are bijectively labeled by elements of $X$. The {\it hybridization number} of $N$, denoted by $h(N)$,  is defined as
\[h(N) = \sum_{\substack{v\in V(N): \delta^-(v)>0}}(\delta^-(v)-1) = |E(N)| - |V(N)| + 1\]
\steven{where $\delta^{-}(v)$ denotes the indegree of $v$.}

Given a rooted  network $N$ on $X$ and  a rooted binary  tree 
$T$ on  $X'$, with $X'\subseteq X$, we say that $T$ is {\it displayed} by $N$ if $T$ can be obtained from $N$ by
deleting a subset of its edges and any resulting degree 0 vertices, and then 
  suppressing vertices with $\delta^-(v)=\delta^+(v)=1$.
 
We are now ready to define the hybridization \steven{number} problem: \\
\\
\textbf{Problem: $HN(T_1, T_2)$}\\
\textbf{Input: }Two rooted binary trees $T_1$, $T_2$ on the same set of taxa $X$.\\
\textbf{Output: }A rooted network \smore{$N$} displaying $T_1$ and $T_2$ such that $h(N)$ is \steven{minimum} over all rooted  networks with this property.\\

The hybridization number for  $T_1$ and $T_2$, denoted by $h(T_1, T_2)$, is defined as the hybridization number of this \steven{minimum} network. As done for TBR and rSPR, we can give a characterization of the hybridization number in terms of agreement forests. To do so, we need to define \emph{acyclic} \steven{agreement forests}.

Let
\smore{$\cF=\{F_1,F_2,\ldots,F_k\}$}
be an agreement forest for two rooted binary  trees $T_1$ and $T_2$ on the same set of taxa $X$, and let
$AG(T_1,T_2,\cF)$ be the directed graph whose vertex set is $\cF$ and for which $(F_i,F_j)$ is an arc iff $i\ne j$,
and either
\begin{enumerate}
\item [(1)] the root of $T_1(\mathcal{L}(F_i))$ is an ancestor of the root of  $T_1(\mathcal{L}(F_j))$  in $T_1$, or
\item [(2)] the root of $T_2(\mathcal{L}(F_i))$ is an ancestor of the root of  $T_2(\mathcal{L}(F_j))$  in $T_2$.
\end{enumerate}
We call  $\cF$ an \emph{acyclic agreement forest} (AAF) for $T_1$ and $T_2$ if $AG(T_1,T_2,\cF)$ does not contain any directed cycle. A \emph{maximum} acyclic agreement forest (MAAF), is an AAF of {minimum} size.

The acyclicity condition is used to model the fact that species cannot inherit genetic material from their own \smore{offspring}. 
The two problems defined above are closely related, \steven{as the following well-known
result shows.}

\begin{theorem}[\cite{BaroniEtAl2005}]
Given two rooted binary trees $T_1$, $T_2$ on the same set of taxa $X$, we have that  \steven{$h(T_1, T_2)=|MAAF(T_1, T_2)| - 1.$}
\end{theorem}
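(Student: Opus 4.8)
The plan is to prove the two inequalities $h(T_1,T_2)\le |MAAF(T_1,T_2)|-1$ and $h(T_1,T_2)\ge |MAAF(T_1,T_2)|-1$ separately, by turning acyclic agreement forests into displaying networks and, conversely, displaying networks into acyclic agreement forests, all the while keeping a tight grip on the hybridization count.

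For the upper bound I would take a MAAF $\cF=\{F_1,\dots,F_k\}$ (so that $AG(T_1,T_2,\cF)$ is acyclic) and relabel the components so that $F_1,\dots,F_k$ is a topological order of $AG(T_1,T_2,\cF)$; note that the component containing $\rho$ is necessarily a source and may be taken as $F_1$. I then build a network $N$ incrementally, starting from $F_1$: for $i=2,\dots,k$, locate in the partially built network the point corresponding to where $\mathrm{root}(T_1(\mathcal{L}(F_i)))$ attaches inside $T_1$ --- the topological ordering guarantees this point has not yet been ``used up'', since every component whose $T_1$-root is a strict ancestor of $F_i$'s $T_1$-root has already been inserted --- subdivide the relevant edge with a new vertex $p$, do the symmetric thing for $T_2$ obtaining a new vertex $q$, and add a reticulation vertex $h_i$ with arcs $p\to h_i$, $q\to h_i$, $h_i\to \mathrm{root}(F_i)$. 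Restricting $N$ to the ``$T_1$-arcs'' (deleting the arc $q\to h_i$ at each $h_i$) and suppressing degree-$2$ vertices recovers exactly $T_1$, because $\cF$ is a forest of $T_1$; symmetrically for $T_2$. So $N$ displays both trees, the only reticulations are $h_2,\dots,h_k$, each of indegree $2$, hence $h(N)=\sum_{v:\delta^-(v)>1}(\delta^-(v)-1)=k-1$, and $N$ is acyclic since every arc respects the topological order.

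For the lower bound I would start from a network $N$ displaying $T_1$ and $T_2$ with $h(N)=r$, fix embeddings of $T_1$ and of $T_2$ in $N$, and delete a carefully chosen set $C$ of $r$ arcs: at each reticulation $v$ of $N$ one keeps the unique in-arc used by the $T_1$-embedding (and if $v$ is not met by that embedding, the one used by the $T_2$-embedding, and otherwise an arbitrary in-arc) and removes the rest; after suppressing unlabeled leaves and degree-$2$ vertices, the remaining forest $\cF$ has at most $r+1$ components, and each $T_i$-embedding restricted to $N$ minus $C$ breaks into precisely this forest, so $\cF$ is an agreement forest for $T_1$ and $T_2$. Acyclicity of $AG(T_1,T_2,\cF)$ then follows from acyclicity of $N$: if the root of $T_i(\mathcal{L}(F_a))$ is an ancestor of the root of $T_i(\mathcal{L}(F_b))$ in $T_i$, then in the embedding the image of the first root is an ancestor of the image of the second in $N$, so a directed cycle in $AG(T_1,T_2,\cF)$ would lift to a directed cycle in $N$. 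Hence $|MAAF(T_1,T_2)|\le r+1 = h(T_1,T_2)+1$, and combining the two bounds gives equality.

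I expect the second direction to be the main obstacle. The delicate point is that the arc set $C$ must be chosen so that $N$ minus $C$ decomposes $T_1$ and $T_2$ into the \emph{same} forest simultaneously --- the per-reticulation choices are not independent, and one must argue that below every surviving reticulation in-arc the two embeddings see identical subtrees --- and one must carefully justify that ancestor relations in $T_i$ lift to ancestor relations in $N$, which is exactly where the equivalence between acyclicity of $N$ and acyclicity of $AG(T_1,T_2,\cF)$ is used, in both directions. (This is in essence the argument of Baroni et al.\ \cite{BaroniEtAl2005}; the treewidth and MSOL machinery developed elsewhere in this paper is not needed for it, but the equivalence is precisely what makes the ``elimination ordering'' reformulation of HN exploited in the sequel a faithful one.)
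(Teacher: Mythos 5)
A preliminary remark: the paper does not prove this statement at all --- it is imported verbatim from Baroni et al.\ \cite{BaroniEtAl2005} --- so your proposal can only be measured against the standard argument. Your first direction (ordering the components of a MAAF topologically with respect to $AG(T_1,T_2,\cF)$, starting from the component containing $\rho$, and attaching each subsequent $F_i$ via one new indegree-$2$ reticulation) is essentially that argument and is sound, modulo one point of care: the parent of the root of $T_1(\mathcal{L}(F_i))$ need not lie on $T_1\bigl(\bigcup_{j<i}\mathcal{L}(F_j)\bigr)$, so the attachment point must be taken as the lowest \emph{already-built} ancestor; acyclicity is what guarantees that the component supplying that ancestor precedes $F_i$ in the ordering.

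The second direction, however, has a genuine gap, and it is exactly at the spot you flag as delicate. If at each reticulation you keep one in-arc and delete the rest, you remove only $r=\sum_v(\delta^-(v)-1)$ arcs and every non-root vertex of $N-C$ retains indegree exactly $1$; since $N$ is acyclic, $N-C$ is a connected spanning tree, i.e.\ it has \emph{one} component. Keeping the $T_1$-arcs means the $T_1$-embedding survives intact (it does not ``break into'' anything), so after suppression $N-C$ is just $T_1$, while the $T_2$-embedding genuinely shatters; the two trees therefore induce different partitions of $X$ (one block versus several) and the output is not an agreement forest. Concretely, for a one-reticulation network displaying the triplets $ab|c$ and $ac|b$, your $C$ returns the single tree $T_1\neq T_2$. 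The correct extraction deletes \emph{all} in-arcs at every reticulation: this creates one new component per reticulation, hence at most $1+\sum_v(\delta^-(v)-1)=r+1$ components in total, and --- the key point --- each component can be entered only through its root, so each $T_i$-embedding meets it in a single connected subtree containing all of that component's taxa. That is what forces the blocks and their topologies to agree across the two trees; your argument that acyclicity of $N$ implies acyclicity of $AG(T_1,T_2,\cF)$ then goes through as written.
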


The above equivalence formed the basis for results proving that both problems are NP-hard~\cite{Bordewich2007} and fixed parameter tractable~\cite{sempbordfpt2007}.

Here we show \steven{an alternative proof} that computation of hybridization number on two rooted binary trees with the same set of taxa $X$ is FPT, again using MSOL.
We will do this by demonstrating that $|MAAF(T_1,T_2)| =: k$ can be computed in time $O( f(k) \cdot |X| )$ for some computable function $f$ that depends only on $k$. 
Again, we will formulate a logical query on the  display graph to answer the question \emph{``Is $k \leq k'$?''} for increasing values of $k'$, until $k'=k$ is reached and the answer
to the query is ``yes''. Unlike the formulations given earlier for TBR and rSPR, the query has no free variables, and the length of each query will grow as a function of $k'$. However, given that $k' \leq k$, the
length will remain bounded by a function of $k$.
Note that, if a MAAF of size $k$ exists for $T_1$ and $T_2$, then an AF of size $k$ exists too, \steven{and as argued \sthursday{for rSPR}, if two rooted trees have an agreement
forest of size $k$ then so do the underlying, unrooted trees.} So the \smore{treewidth} bound of Theorem \ref{theo:treewidthuMAF} is still valid, where $k=|MAAF(T_1,T_2)|$, and this implies that  an overall running time of the form $O( f(k) \cdot |X| )$ can be achieved. 

\steven{The major challenge when modelling MAAF is to encode the acyclicity constraints. It is not clear whether the formulations from the previous sections, in which agreement forests are modelled directly as sets of edge-cuts, can be (elegantly) extended to include acyclicity constraints. For this reason we choose to discard the agreement forest abstraction,
using it only to generate the treewidth upper bound. For the actual modelling we
use an alternative ``elimination-ordering'' characterization of MAAF/HN, first presented in \cite{kelk2012elusiveness}, which we briefly summarize here.}

\steven{Given a rooted binary tree $T$ on $X$, we say a subtree $T'$ of $T$ is \emph{pendant}
if there exists a vertex $u$ of $T$ such that $T' = T_u$. In this case it is then natural to
associate $T'$ with the subset of $X$ labeling its leaves, i.e. $\mathcal{L}(T')$. We say that $T'$ is a \emph{common pendant} subtree of $T_1$ and $T_2$ if it is a pendant subtree of both $T_1$ and $T_2$. We call $(S_{1}, S_{2}, \dots, S_{p})$ $(p \geq 0)$ a \emph{common pendant subtree sequence} of $T_1$ and $T_2$ of length $p$ if for every $1 \leq i \leq p$, $S_i$ is a common pendant subtree of $T_1 - \cup_{j<i} \mathcal{L}(S_j)$ and
$T_2 - \cup_{j<i} \mathcal{L}(S_j)$. We say that such a sequence is additionally a \emph{tree sequence} if the two trees $T_1 - \cup_{j \leq p} \mathcal{L}(S_j)$ and
$T_2 - \cup_{j \leq p} \mathcal{L}(S_j)$ are identical. Informally, a tree sequence of length
$p$ describes a sequence of $p$ common pendant subtrees that can be successively pruned from the original trees to reach a common core tree. If $T_1$ and $T_2$ are already identical then
we use the empty tree sequence $\emptyset$, and take $p=0$, to represent this.}

\steven{
The results in \cite{kelk2012elusiveness} establish that $h(T_1, T_2)$ is equal to the smallest $p$ such that a tree sequence of length $p$ exists. This is the characterization of optimality that we will
use i.e. each logical query will pose the question, \emph{``Does a tree sequence of length $k'$ exist?''}. There is no need to model acyclicity in this formulation. However, we do need to model the concept \emph{common pendant subtree} and the impact of earlier pruning steps on the original trees. 
}

 
Before writing down the MSOL formulation we need some new \steven{auxiliary predicates}.  
The first \steven{predicate} checks whether there is a path from $x_1$ to $x_2$ within $Z$ that survives the deletion of vertex $u$. \sthursday{This is similar to the $PAC$ predicate defined earlier.}
\commentOut{
Namely, some path survives the
deletion if and only if \smore{$x_1$ and $x_1$ both survive and} every bipartition with $x_1$ and $x_2$ on different sides of the bipartition, has some edge crossing the bipartition that does not have $u$ as an endpoint. (Equivalently: the
path does not survive if and only if you can find a bipartition in which all crossing edges have been killed by deletion of $u$).}
\begin{eqnarray*}
pathSurvivesVertexCut(Z, x_1, x_2, u) &:=& \smore{(u \neq x_1) \wedge (u \neq x_2) \wedge}\\&&((x_1 = x_2) \vee \neg \exists P, Q ( Bipartition(Z,P,Q) \wedge \\&&x_1 \in P \wedge x_2 \in Q \wedge (\forall p, q (p \in P \wedge q \in Q \Rightarrow \\
&&\neg adj( p,q) \vee p=u \vee q=u))))\\
\end{eqnarray*}

\steven{For a vertex $u \neq \rho$ in a tree $T_i$ and a taxon $x \in X$, observe that
$x$ is in the clade rooted at $u$ (i.e. in the label set of the pendant subtree rooted at $u$) if and only if $(x=u)$ or deleting $u$ from $T_i$ destroys all paths from $\rho$ to $x$ (inside $T_i$). Hence:}
\[
InCladeUnder^{i}( u, x ) := (u = x) \vee \neg pathSurvivesVertexCut(V_i, \rho, x, u)
\]
This leads naturally to a \smore{predicate} for testing whether $C \subseteq X$ is a clade of $T_i$:
\[
Clade^{i}(C) := \exists u \in V_i (\forall x ( x \in C \Leftrightarrow InCladeUnder^{i}(u,x)))
\]
As we shall see, it is useful to extend this \steven{predicate} with an optional list $Z_1, Z_2, \ldots$ which represent subsets of $X$ \steven{describing common pendant subtrees that have already been pruned from the tree}. The
statement $Clade^{i}(C, Z_1, Z_2, \ldots...)$ evaluates to true if \smore{and only if} $C$ is a clade of $T_i$
\emph{after} the taxa in $Z_1, Z_2, \ldots...$ have been pruned away. \steven{(To avoid ambiguity the predicate automatically returns false if $C$ intersects with any of the $Z_i$.)} Note that \steven{the list of $Z_i$ is shown in square brackets to emphasize that it is a ``macro'':
there will be a different predicate for each possible list length $t$. The list of $Z_i$ will never be longer than $h(T_1, T_2)$, and length of the generated predicate will
be bounded by a function of the list length, so the length of the overall logical query remains
bounded by a function of $h(T_1, T_2)$.}
\begin{eqnarray*}
Clade^{i}(C, [Z_1, \ldots..., Z_t]) &:=& NoIntersect(C, Z_1) \wedge ... \wedge NoIntersect(C, Z_t) \wedge \exists u \in V_i \\
&& (\forall x ( x \in C \Rightarrow InCladeUnder^{i}(u,x)) \wedge \\
&& \forall x (InCladeUnder^{i}(u,x) \Rightarrow (x \in C) \vee (x \in Z_1) \vee ... \vee (x \in Z_t)))
\end{eqnarray*}

\steven{We are now ready to define the CPS (i.e. ``common pendant subtree'') predicate. We
do this by observing that $C \subseteq X$ corresponds to a common pendant subtree of $T_1$ and $T_2$ if
and only if $C$ is a clade of both trees (this ensures that $C$ is pendant in both trees)
and the set of triplets induced by $C$ is identical in both trees (this ensures that the
pendant subtree has the same topology in both trees).}
\begin{eqnarray*}
CPS(T_1, T_2, C) &:=& Clade^{1}(C) \wedge Clade^{2}(C)
\wedge \forall x \forall y \forall z (
x,y,z \in C \wedge \\ &&\ allDiff(x,y,z) \Rightarrow (Triplet^{1}(x,y,z) \Leftrightarrow Triplet^{2}(x,y,z)))
\end{eqnarray*}

We extend this now with \steven{a list of $Z_i$} representing the taxa we have already pruned. \steven{This new version of the predicate} evaluates to true \steven{if and only if} $C$ \steven{corresponds to} a common pendant subtree in the two trees \emph{after} all the $Z_i$ have been pruned away. (Here we make implicit use of the fact that \steven{the}
\emph{Clade} \steven{predicate} immediately returns false whenever $C$ intersects with the $Z_i$.)
\begin{eqnarray*}
CPS(T_1, T_2, C, [Z_1, \ldots, Z_t]) &:=& 
Clade^{1}(C, Z_1, \ldots, Z_t) \wedge   Clade^{2}(C, Z_1, \ldots, Z_t)
\wedge  \forall x \forall y \forall z \\
&& (
x,y,z \in C \wedge allDiff(x,y,z) \Rightarrow (Triplet^{1}(x,y,z) \Leftrightarrow  \\ 
&&Triplet^{2}(x,y,z))
)
\end{eqnarray*}



\steven{We are now ready to directly pose the question: is there a tree sequence of length
$k'$? We can assume $k' \geq 1$ because $k' =0$ is trivial to check in polynomial time. To
make the formulation slightly more compact we actually construct a list of length $k' + 1$,
where $C_{k'+1}$ represents the taxa that still remain after the common pendant subtrees have been pruned away: we
can then test that the sequence is a \emph{tree} sequence (i.e. that a common core tree remains) by testing
that $CPS(T_1, T_2, C_{k'+1}, C_1, \ldots, C_{k'})$ is true. Note that the \emph{HybNum}
predicate is again a macro, whose expansion depends on $k'$.} 
\begin{eqnarray*}
HybNum[k'](T_1, T_2) &:=& \exists C_1, \ldots, C_{k'}, \steven{C_{k'+1}} ( Partition(X, C_1, \ldots, C_{k'+1}) \wedge\\
&& CPS(T_1, T_2, C_1, \emptyset) \wedge\\
&& CPS(T_1, T_2, C_2, C_1) \wedge\\
&&CPS(T_1, T_2, C_3, C_1, C_2) \wedge\\
&&...\\
&& CPS(T_1, T_2, C_{k'+1}, C_1, \ldots, C_k')).
\end{eqnarray*}

The \emph{Partition} predicate has the expected meaning and definition:\\
\\
\sthursday{$Partition(X, C_1, \ldots, C_k) := (\wedge_{i \neq j} NoIntersect(C_i, C_j)) \wedge \forall u ( u \in X \Leftrightarrow (u \in C_1 \vee u \in C_2 \vee \ldots \vee  u \in C_k))$}\\
\\
\cs{Concluding, we have the following result:}
\begin{theorem}
\steven{Computation of hybridization number / MAAF on two 
rooted binary trees on the same set of taxa $X$ is linear time FPT. 
That is, the optimum $k$ can be computed in time $O( f(k) \cdot |X| )$, for some computable
function that only depends on $k$.}
\end{theorem}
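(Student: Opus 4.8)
The plan is to reuse the three-step template that produced Theorem~\ref{theo:TBR} and the rSPR result: (i) bound the treewidth of the display graph in terms of the parameter, (ii) confirm that for each candidate value $k'$ the macro $HybNum[k'](T_1,T_2)$ expands to a genuine MSOL sentence whose length is bounded by a function of $k'$, and (iii) invoke Courcelle's theorem in the extended form of Arnborg et al.\ \cite{Arnborg91}, together with the linear-time treewidth recognition algorithm of \cite{Bodlaender96}, to obtain a linear-time test, iterating over $k'=1,2,\ldots$ (the case $k'=0$ being a trivial polynomial-time check of whether $T_1$ and $T_2$ are already identical) until the first affirmative answer.

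First I would observe that if $|MAAF(T_1,T_2)|=k$ then in particular an agreement forest of size $k$ exists for $T_1$ and $T_2$, and --- as already argued for rSPR --- ignoring edge orientations yields an agreement forest of size $k$ for the unrooted interpretations of the two trees. Hence Theorem~\ref{theo:treewidthuMAF} applies and bounds the treewidth of the display graph $D$ by $k+1$. Since $D$ has $O(|X|)$ vertices and edges, and since treewidth at most $k+1$ can be certified (with an accompanying tree decomposition) in linear time \cite{Bodlaender96}, the structural hypotheses of Courcelle's theorem are met with a width bound depending only on $k$.

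Next I would verify correctness of the logical formulation. By the elimination-ordering characterization of \cite{kelk2012elusiveness}, $h(T_1,T_2)$ equals the least $p$ for which there is a common pendant subtree sequence of length $p$ that is additionally a \emph{tree sequence}. I would check that $CPS(T_1,T_2,C,[Z_1,\ldots,Z_t])$ holds precisely when $C$ is the taxon set of a common pendant subtree of $T_1-\bigcup_j Z_j$ and $T_2-\bigcup_j Z_j$: the two $Clade^{i}$ conjuncts force $C$ to be a clade of both trees after pruning (so the corresponding subtree is pendant in each), while the triplet conjuncts force those two pendant subtrees to have the same topology, using the fact that a rooted binary tree is determined by its triplets \cite{bryant1995extension}. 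It then follows that $HybNum[k'](T_1,T_2)$ --- which asserts a partition $C_1,\ldots,C_{k'},C_{k'+1}$ of $X$ in which each $C_i$ with $i\le k'$ is a common pendant subtree after removal of $C_1,\ldots,C_{i-1}$, and $C_{k'+1}$ is a common pendant subtree after removal of all of $C_1,\ldots,C_{k'}$ (which forces the residual trees to coincide, i.e.\ the sequence to be a tree sequence) --- evaluates to true exactly when a tree sequence of length $k'$ exists. Once $k'$ is fixed this macro expands to a sentence of fixed length, and the number of nested $Z_i$-slots never exceeds $k'$, so the sentence length is bounded by some computable $g(k')$.

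Finally, querying for $k'=1,2,\ldots$ and stopping at the first success returns $k=h(T_1,T_2)$; by Courcelle's theorem the query at step $k'$ runs in time $f'(k+1,g(k'))\cdot O(|X|)$, and summing over $k'\le k$ gives an overall bound of the form $f(k)\cdot O(|X|)$ for a computable $f$ depending only on $k$. I expect the main obstacle to be the correctness argument for the pruned-clade predicates, and in particular making rigorous that $Clade^{i}(C,[Z_1,\ldots,Z_t])$ really models ``$C$ is a clade of $T_i$ after deleting the taxa in the $Z_j$, suppressing unlabelled leaves and degree-two vertices'': deletion followed by suppression can merge or destroy clades, so one must confirm that the $InCladeUnder^{i}$ formulation --- which reasons only about $\rho$-to-$x$ paths in the untouched $T_i$ --- still singles out exactly the post-pruning clades, and that the $NoIntersect$ guards correctly exclude the degenerate cases. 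The analogous subtlety for triplets (that the triplet surviving on three taxa in a pruned tree is the same triplet they induced in the original tree) has already been noted in the rSPR section and transfers directly.
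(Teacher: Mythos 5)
Your proposal is correct and follows essentially the same route as the paper: the treewidth bound via ``MAAF of size $k$ implies an (unrooted) agreement forest of size $k$'' combined with Theorem~\ref{theo:treewidthuMAF}, the elimination-ordering (tree sequence) characterization of $h(T_1,T_2)$ from \cite{kelk2012elusiveness}, and the $HybNum[k']$ macro of length bounded by a function of $k'$, queried for increasing $k'$. The subtlety you flag about $Clade^{i}(C,[Z_1,\ldots,Z_t])$ correctly modelling post-pruning clades is left at the same level of detail in the paper itself, so your treatment matches its proof.
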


\subsection{Parsimony distance on binary characters}

Let $T$ be an unrooted binary  tree on a set of taxa $X$. A \emph{binary character} $f$ is simply a function $f : X \rightarrow \{red,blue\}$. An \emph{extension} of $f$ to $T$ is
a mapping $g : V(T) \rightarrow \{red,blue\}$ such that, for all $x \in X$, $g(x)=f(x)$. For
a given character $f$, an \emph{optimal extension} is any extension $g$ of $f$ such
that the number of bichromatic edges is minimized. The number of bichromatic edges
in an optimal extension is called the \emph{parsimony score} of $f$ with respect to $T$, and denoted $l_f(T)$. The well-known algorithm by Fitch can be used to compute $l_f(T)$ (and an optimal extension) in polynomial time \cite{fitch1971}. We shall describe Fitch's algorithm in due course. The \emph{parsimony distance problem on binary characters}, denoted $d^2_{MP}$, is
defined as follows \cite{fischer2014maximum}.\\
\\
\textbf{Problem: $d^2_{MP}(T_1, T_2)$}\\
\textbf{Input: }Two unrooted binary trees $T_1$, $T_2$ on the same set of taxa $X$\\
\textbf{Output: }Construct a binary character $f$ on $X$ such that the
value $| l_f(T_1) - l_f(T_2) |$ is maximized.\\
\\
We use $d^2_{MP}$ to denote
the optimum value of $| l_f(T_1) - l_f(T_2)|$. The problem was recently shown to be NP-hard \steven{and APX-hard} \cite{kelk2014complexity}. It is not known whether the problem is FPT in $d^2_{MP}$. The following result, however, is already known.

\begin{lemma}[\cite{fischer2014maximum}]
\label{lem:fk2014}
Let $T_1, T_2$ be two unrooted binary trees on the same set of taxa $X$. Then
$d^2_{MP}(T_1, T_2) \leq d_{TBR}(T_1, T_2)$.
\end{lemma}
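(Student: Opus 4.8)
The plan is to prove this directly via the agreement-forest characterization of $d_{TBR}$ (Theorem~\ref{thm:tbrIsMaf}), rather than reasoning about TBR moves themselves. Let $k := d_{TBR}(T_1,T_2)$, so by Theorem~\ref{thm:tbrIsMaf} there is a uAF $\mathcal{F} = \{F_1,\ldots,F_{k+1}\}$ for $T_1$ and $T_2$, i.e. a partition of $X$ into blocks $X_1,\ldots,X_{k+1}$ such that $T_1|_{X_j}$ and $T_2|_{X_j}$ are isomorphic for each $j$, and these induced subtrees are ``edge-disjoint'' in the sense that $\mathcal{F}$ is obtainable from each $T_i$ by deleting $k$ edges and suppressing degree-2 vertices. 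The goal is to build, from $\mathcal{F}$, a single binary character $f$ with $|l_f(T_1) - l_f(T_2)| \le k$. Actually the cleanest route is the contrapositive/direct inequality: show that for \emph{every} binary character $f$, $|l_f(T_1) - l_f(T_2)| \le k$.

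First I would observe the additivity of parsimony score across an agreement forest: for any binary character $f$ and any tree $T$, if $\mathcal{F} = T - E$ with $|E| = k$ (so $\mathcal{F}$ has $k+1$ components with label sets $X_1,\ldots,X_{k+1}$), then
\begin{equation*}
\sum_{j=1}^{k+1} l_{f|_{X_j}}(T|_{X_j}) \;\le\; l_f(T) \;\le\; \sum_{j=1}^{k+1} l_{f|_{X_j}}(T|_{X_j}) + k.
\end{equation*}
The lower bound holds because an optimal extension of $f$ on $T$ restricts to a (not necessarily optimal) extension on each component after deleting the $k$ edges — deleting edges and suppressing degree-2 vertices cannot increase the number of bichromatic edges within a component, and it discards at most $k$ bichromatic edges in total. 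The upper bound holds because we can glue together optimal extensions on the components: reintroducing the $k$ deleted edges (and the suppressed degree-2 vertices, which we may colour to match their neighbours along each reinserted path so they contribute nothing) adds at most one bichromatic edge per reinserted edge. I would write this out carefully since the suppression/reinsertion bookkeeping is where an off-by-something could hide, but it is routine.

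Now apply this to both trees. Since $\mathcal{F}$ is an agreement forest, $T_1|_{X_j}$ and $T_2|_{X_j}$ are isomorphic as leaf-labelled trees, hence $l_{f|_{X_j}}(T_1|_{X_j}) = l_{f|_{X_j}}(T_2|_{X_j})$ for every $j$; write $S := \sum_j l_{f|_{X_j}}(T_1|_{X_j})$ for this common total. Then $S \le l_f(T_1) \le S + k$ and $S \le l_f(T_2) \le S + k$, so $|l_f(T_1) - l_f(T_2)| \le k = d_{TBR}(T_1,T_2)$. Taking the maximum over all binary characters $f$ gives $d^2_{MP}(T_1,T_2) \le d_{TBR}(T_1,T_2)$, as required.

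The main obstacle is the careful proof of the two-sided ``additivity across an agreement forest'' inequality, specifically handling the vertex-suppression step correctly: one must check that suppressing a degree-2 vertex $x$ on a path $(u, x, v)$ replaces two edges by one and never increases the bichromatic count (it decreases it by one exactly when $u,v$ get the same colour but $x$ differs, and otherwise leaves it unchanged), and symmetrically that on reinsertion one can always recolour such internal path vertices to avoid creating extra bichromatic edges. Everything else — the isomorphism giving equal component scores, and the final arithmetic — is immediate.
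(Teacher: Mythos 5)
The paper does not prove this lemma itself---it is imported by citation from the reference given in its statement---but your agreement-forest argument is correct and is essentially the proof from that source: fix a uMAF with $k+1$ components, note each component contributes the same parsimony score to both trees, and bound the effect of the $k$ cut edges (and the suppressed vertices) by showing $S \le l_f(T_i) \le S+k$ for every character $f$. Your two-sided additivity inequality and the reduction to bounding $|l_f(T_1)-l_f(T_2)|$ for all $f$ are sound; the only bookkeeping you gloss over slightly is that components may also contain dangling unlabeled subtrees (not just degree-2 vertices), but these can be coloured monochromatically to match their attachment point and change nothing.
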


Given two trees $T_1, T_2$ as input to $d^2_{MP}$, it is not known whether the display
graph $D$ \cs{of $T_1$ and  $T_2$} has treewidth bounded by a function of $d^2_{MP}$. However, 
from Lemma \ref{lem:fk2014} and earlier results in this article \steven{(Theorems \ref{thm:tbrIsMaf} and \ref{thm:twbound})} it is clear that $D$
has treewidth bounded by a function of $d_{TBR}(T_1, T_2)$. An MSOL formulation modelling
$d^2_{MP}$, whose length is bounded by a function of $d^2_{MP}$, will therefore give a running time of the form $f( d_{TBR}(T_1, T_2) ) \cdot O(|X|)$ 
for some computable function \steven{$f$} that only depends on $d_{TBR}(T_1, T_2)$. We now give such a formulation. We will remain within the framework of \cite{Arnborg91}, this time using the (``linear extremum'') optimization variant of MSOL. \steven{This allows us to maximize or minimize an affine
function of (the cardinalities of) the free set variables in the query.}

The MSOL formulation we give here, which is based on an ILP formulation from
\cite{kelk2014complexity}, maximizes $l_f(T_1) - l_f(T_2)$. (To compute $d^{2}_{MP}$ we
need to use the MSOL machinery twice, once for $l_f(T_1) - l_f(T_2)$ and once
for $l_f(T_2) - l_f(T_1)$, taking the maximum of the two results. The second call only
differs in its objective function so we omit details).


The basic idea is to range over all possible binary characters, simultaneously embedding two static formulations\footnote{Interestingly, the earlier phylogenetics MSOL
articles \cite{compatibility2006,strictcompatibility2014} also used static formulations: in that
case the classical polynomial-time algorithm of Aho.} of Fitch's algorithm to ``compute'' $l_f(T_1) - l_f(T_2)$.  

Fitch's algorithm proceeds as follows. If $T$ is not rooted, we root it arbitrarily (by subdividing an arbitrary edge). The algorithm then works in two phases, a bottom-up phase which computes
$l_f(T)$, and then a top-down phase which actually computes a corrresponding extension. In
the bottom-up phase, we start by assigning each taxon $x$ the singleton set of colours $S(x) := \{ f(x) \}$. For an internal node $u$ with children $v_1, v_2$ we set $S(u) := S(v_1) \cap S(v_2)$ (if $S(v_1) \cap S(v_2) \neq \emptyset$, in which case we say $u$ is an \emph{intersection node}) and $S(u) := S(v_1) \cup S(v_2)$ (if $S(v_1) \cap S(v_2) = \emptyset$, in which case we say that $u$ is a \emph{union node}). The value $l_f(T)$ is equal to the number of internal nodes that are union nodes. (We omit a description of the constructive top-down phase as it is not relevant for this article).

To translate this into an MSOL formulation, we begin by arbitrarily rooting $T_1$ and $T_2$ and \steven{using $\rho$ as the placeholder for the root, in the usual fashion.} The central idea
is to partition the vertices of each tree $T_i$ into four possible \steven{subsets $R^{i}, B^{i}, RB^{i}_{I}$ and $RB^{i}_{U}$} corresponding to the set \steven{of colours that Fitch allocates to each node, and distinguishing union events from intersection events:}  \emph{red}, \emph{blue}, \emph{\{red, blue\}} (intersection node) and \emph{\{red, blue\}} (union node). 
\steven{We therefore ask the MSOL
formulation to instantiate the free set variables $R^{i}, B^{i}, RB^{i}_{I}$ and $RB^{i}_{U}$ $(i \in \{1,2\})$ such that the expression $|RB^{1}_U| - |RB^{2}_U|$ is maximized. }
(If desired, this can then be made constructive via Theorem 4
of \cite{compatibility2006}.) The only significant work is \steven{simulating the bottom-up execution of Fitch's algorithm}. In particular, encoding expressions which describe the state of a parent node $u$ in terms of its two children $v_1, v_2$.\\
\\
We introduce the \steven{auxiliary} predicate $child^{i}(u,v)$ which says that $v$ is a child of $u$
in $T_i$. We can model this as follows: $v$ is a child of $u$ in $T_i$ if and only if there is an edge $e$ in $T_i$ such that $v$ and $u$ are both endpoints of $e$ and there does \emph{not} exist a path
from $\rho$ to $v$ that survives the edge cut $e$. \steven{(Here we have specialized the PAC predicate from earlier so that it only takes
a single edge, rather than a set of edges, as its \smore{fourth} argument.)}
\[
child^{i}(u,v) := (u \neq v) \wedge \exists e \in E_i ( R^{D}(e,u) \wedge R^{D}(e,v) \wedge \neg PAC(\smore{V_i}, \rho, v,  e )) 
\]
\steven{For each tree $T_i$ we add the following \smore{constraints}, which encode (in this
order):}
\steven{
\begin{itemize}
\item The four subsets $R$, $B$, $RB_{I}$ and $RB_{U}$ partition the vertices of the tree;
\item A vertex in $X$ can only be in $R$ or $B$;
\item An internal node is in $R$ if and only if (one child is in $R$ and the other child is not
in $B$);
\item An internal node is in $B$ if and only if (one child is in $B$ and the other child is not
in $R$);
\item An internal node is in $RB_I$ if and only if (neither child is in $R$ or $B$);
\item An internal node is in $RB_U$ if and only if (one child is in $R$ and one child is in $B$).
\end{itemize}
}
\begin{eqnarray*}
\bigg ( Partition( V_i, R^{i}, B^{i}, RB^{i}_{I}, RB^{i}_{U} ) \bigg ) \wedge
\bigg ( \forall x \in X( x \not \in RB^{i}_{I}  \wedge x \not \in RB^{i}_{U} ) \bigg ) \wedge\\
\bigg ( \forall u \in V_i( u \not \in X \Rightarrow ( u \in R^{i} \Leftrightarrow \exists c_1, c_2 \in V_i( (c_1 \neq c_2) \wedge child^{i}(u, c_1) \wedge child^{i}(u,c_2) \wedge \\ c_1 \in R^{i} \wedge c_2 \not \in B^{i}    ))) \bigg ) \wedge \\
\bigg ( \forall u \in V_i( u \not \in X \Rightarrow ( u \in B^{i} \Leftrightarrow \exists c_1, c_2 \in V_i( (c_1 \neq c_2) \wedge child^{i}(u, c_1) \wedge child^{i}(u,c_2) \wedge \\ c_1 \in B^{i} \wedge c_2 \not \in R^{i}    ))) \bigg ) \wedge \\
\bigg ( \forall u \in V_i( u \not \in X \Rightarrow ( u \in RB^{i}_{I} \Leftrightarrow \exists c_1, c_2 \in V_i( (c_1 \neq c_2) \wedge child^{i}(u, c_1) \wedge child^{i}(u,c_2) \wedge \\ c_1 \not \in R^{i} \wedge c_1 \not \in B^{i} \wedge c_2 \not \in R^{i} \wedge c_2 \not \in B^{i}  ))) \bigg ) \wedge \\
\bigg ( \forall u \in V_i( u \not \in X \Rightarrow ( u \in RB^{i}_{U} \Leftrightarrow \exists c_1, c_2 \in V_i( (c_1 \neq c_2) \wedge child^{i}(u, c_1) \wedge child^{i}(u,c_2) \wedge \\ c_1 \in R^{i} \wedge c_2 \in B^{i} ))) \bigg ).
\end{eqnarray*}
\steven{Finally, we ensure that both trees select the same character as follows:}
\[
\sthursday{
\forall x \in X( (x \in R^{1} \Leftrightarrow x \in R^{2}) \wedge (x \in B^{1} \Leftrightarrow x \in B^{2}))}
\]
\steven{This concludes the formulation}.
\cs{Then we have the following result:}

\begin{theorem}
$d^2_{MP}(T_1, T_2)$ is \smore{linear time} fixed parameter tractable in parameter $d_{TBR}(T_1,T_2)$.
\end{theorem}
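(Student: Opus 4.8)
The plan is to apply the optimization (``linear extremum'') variant of Courcelle's theorem from \cite{Arnborg91} to the display graph $D$ of $T_1$ and $T_2$, using exactly the MSOL formulation just constructed. First I would note that this formulation is \emph{static}: it consists of a fixed number of predicates, quantifiers and set variables, none of which depends on any parameter, so its length $\ell$ is an absolute constant. Next I would bound the treewidth of $D$: by Lemma~\ref{lem:fk2014} we have $d^2_{MP}(T_1,T_2) \le d_{TBR}(T_1,T_2)$, and by Theorems~\ref{thm:tbrIsMaf} and~\ref{thm:twbound} a uAF of size $d_{TBR}(T_1,T_2)+1$ exists, so $D$ has treewidth at most $d_{TBR}(T_1,T_2)+2$. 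Since $|V(D)|$ and $|E(D)|$ are $O(|X|)$, the optimization meta-theorem computes $\max\big(|RB^{1}_U| - |RB^{2}_U|\big)$ over all instantiations of the free set variables satisfying the constraint block in time $f(\mathrm{tw}(D),\ell)\cdot O(|X|) = f'(d_{TBR}(T_1,T_2))\cdot O(|X|)$; running the machinery a second time with objective $|RB^{2}_U| - |RB^{1}_U|$ and taking the larger of the two outcomes yields $d^2_{MP}(T_1,T_2)$ within the same time bound.

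The substantive part of the proof is establishing that the formulation is faithful, i.e.\ that its satisfying assignments are in bijection with binary characters $f : X \to \{red,blue\}$ and that under this correspondence $|RB^{i}_U| = l_f(T_i)$. Given $f$ — equivalently, given the membership of the leaves in $R^i$/$B^i$, which the final line of the formulation forces to coincide across the two trees — the predicate $child^{i}(u,v)$ correctly recovers the parent--child relation with respect to the implicit rooting at $\rho$, so each internal node's four-way membership is constrained solely by the states of its two children, precisely according to Fitch's update rule. Here one checks the binary-character identities underlying the biconditionals: for two Fitch sets $S(v_1),S(v_2) \subseteq \{red,blue\}$ one has $S(v_1)\cap S(v_2) = \{red\}$ iff both sets contain $red$ and at least one equals $\{red\}$ (i.e.\ one child is in $R^i$ and the other is not in $B^i$), one has $S(v_1)\cap S(v_2)=\emptyset$ iff one child is in $R^i$ and the other in $B^i$ (a union node), and analogously for the $\{blue\}$ and $\{red,blue\}$-intersection cases. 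Proceeding by induction from the leaves up to $\rho$, these biconditionals admit a \emph{unique} solution for every $f$, and in that solution the union nodes of $T_i$ are exactly the elements of $RB^{i}_U$; by the stated description of Fitch's algorithm this set has cardinality $l_f(T_i)$. Hence the maximised objective equals $\max_f\big(l_f(T_1)-l_f(T_2)\big)$, and the two-call argument above gives $d^2_{MP}$.

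I expect the main obstacle to be exactly this uniqueness and faithfulness check: the four constraints defining $R^{i}$, $B^{i}$, $RB^{i}_I$, $RB^{i}_U$ are mutually recursive, so one must rule out spurious satisfying assignments that do not arise from any genuine Fitch labelling. What rescues the argument is the well-foundedness of the recursion — every child is strictly closer to the leaves than its parent, so there is no circular dependency — but this has to be written out carefully, together with the case analysis verifying that the binary-character instance of the Fitch update is captured exactly by the conditions as stated. Everything else (the constant length of the formulation, the treewidth bound obtained by chaining Lemma~\ref{lem:fk2014} with Theorems~\ref{thm:tbrIsMaf} and~\ref{thm:twbound}, and the invocation of the optimization meta-theorem of \cite{Arnborg91}) is routine given the results already established in this article.
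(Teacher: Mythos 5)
Your proposal is correct and follows essentially the same route as the paper: chain Lemma~\ref{lem:fk2014} with Theorems~\ref{thm:tbrIsMaf} and~\ref{thm:twbound} to bound the treewidth of the display graph by $d_{TBR}(T_1,T_2)+2$, observe that the Fitch-simulating formulation has constant length, invoke the optimization variant of the extended MSOL framework of Arnborg et al., and run it twice for the two signs of the objective. Your additional verification that the four mutually recursive biconditionals admit a unique solution for each character (by well-foundedness of the child relation) and that this solution makes $|RB^{i}_U|=l_f(T_i)$ is exactly the faithfulness argument the paper leaves implicit.
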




\commentOut{Of course, the MSOL formulation does not lead to a practical implementation. However, it opens the door to the
possibility that a fast, practical algorithm for $d_{TBR}$ could \steven{perhaps become part of an algorithmic strategy for computing $d^2_{MP}$ in practice}. 
Given the central role of agreement forests in the phylogenetics literature,
and the extensive array of optimized algorithms that are available for them, this is certainly an avenue worth exploring.}

\section{Conclusion}

\steven{We have demonstrated how agreement forests, which are intensively studied objects in the phylogenetics literature, naturally lead to bounded treewidth in an auxiliary graph structure known as the \emph{display graph}. This opens the door to compact, ``declarative'' proofs of fixed parameter tractability for a range of phylogenetics problems by formulating them in Monadic Second Order Logic (MSOL). Our formulations have introduced a number of logical predicates and design principles that will hopefully be of use to other phylogenetics researchers seeking to utilize this powerful machinery elsewhere in phylogenetics. Indeed, a natural follow-up question is to ask: what are the essential
characteristics of phylogenetics problems that are amenable to this technique?}
\section{Acknowledgements}

We thank Mathias Weller for helpful conversations.

\bibliographystyle{plain}

\bibliography{courcelleArxiv}

\end{document}